\newtheorem{theorem}{Theorem}
\newtheorem{definition}{Definition}
\newtheorem{claim}{Claim}
\newtheorem{lemma}[theorem]{Lemma}
\newtheorem{fact}[theorem]{Fact}
\newcommand{\qed}{\mbox{\ \ \ }\rule{6pt}{7pt} \bigskip}
\newcommand{\comment}[1]{}
\newenvironment{proof}{\noindent{\em Proof:}}{\hfill\qed}
\newenvironment{proofof}[1]{\noindent{\em Proof of #1:}}{\hfill\qed}
\newcommand{\T}{{\mathcal T}}
\newcommand{\E}{{\mathcal E}}
\newcommand{\OPT}{*}
\newcommand{\calD}{{\mathcal D}}
\newcommand{\calU}{{\mathcal U}}
\newcommand{\Gap}{{\textrm {Gap}}}
\newcommand{\Cover}{{\textrm {Cover}}}
\newcommand{\calS}{{\mathcal X}}
\title{Threshold Rules for Online Sample Selection}
\author{Eric Bach\thanks{Computer Sciences Dept., University of
    Wisconsin - Madison, {\tt bach@cs.wisc.edu}.}
  \and Shuchi Chawla\thanks{Computer Sciences Dept., University of
    Wisconsin - Madison, {\tt shuchi@cs.wisc.edu}.}
  \and Seeun Umboh\thanks{Computer Sciences Dept., University of
    Wisconsin - Madison, {\tt seeun@cs.wisc.edu}.}  }
\date{}
\begin{document}

\thispagestyle{empty}

\maketitle

\begin{abstract}
We consider the following sample selection problem. We observe in an
online fashion a sequence of samples, each endowed by a quality. Our
goal is to either select or reject each sample, so as to maximize the
aggregate quality of the subsample selected so far. There is a natural
trade-off here between the rate of selection and the aggregate quality
of the subsample. We show that for a number of such problems extremely
simple and oblivious ``threshold rules'' for selection achieve optimal
tradeoffs between rate of selection and aggregate quality in a
probabilistic sense. In some cases we show that the same threshold
rule is optimal for a large class of quality distributions and is thus
oblivious in a strong sense.

\end{abstract}

\newpage
\setcounter{page}{1}

\section{Introduction}

Imagine a heterogeneous sequence of samples from an array of sensors,
having different utilities reflecting their accuracy, quality, or
applicability to the task at hand.  We wish to discard all but the
most relevant or useful samples. Further suppose that selection is
performed online --- every time we receive a new sample we must make
an irrevocable decision to keep it or discard it. What rules can we
use for sample selection? There is a tradeoff here: while we want to
retain only the most useful samples, we may not want to be overly
selective and discard a large fraction. So we could either fix a rate
of selection (the number of examples we want to retain as a function
of the number we see) and ask for the best quality subsample, or fix a
desirable level of quality as a function of the size of the subsample
and ask to achieve this with the fewest samples rejected.

An example of online sample selection is the following ``hiring''
process that has been studied previously. Imagine that a company
wishing to grow interviews candidates to observe their qualifications,
work ethic, compatibility with the existing workforce, etc. How should
the company make hiring decisions so as to obtain the higest quality
workforce possible?  As for the sensor problem, there is no single
correct answer here. Rather a good hiring strategy depends on the rate
at which the company plans to grow---again there is a trade-off
between being overly selective and growing fast. Broder et
al.~\cite{Hiring} studied this hiring problem in a simple setting
where each candidate's quality is a one-dimensional random variable
and the company wants to maximize the average or median quality of its
workforce.

In general performing such selection tasks may require complicated
rules that depend on the samples seen so far. Our main contribution is
to show that in a number of settings an extremely simple class of
rules that we call ``threshold rules'' is close to optimal on average
(within constant factors).

Specifically, suppose that each sample is endowed with a ``quality'',
which is a random variable drawn from a known distribution. We are
interested in maximizing the aggregate quality of a set of samples,
which is a numerical function of the individual qualities.
Suppose that we want to select a subset of $n$ samples out of a total
of $T$ seen.  Let $Q^{\OPT}_{T,n}$ denote the maximum aggregate
quality that can be achieves by picking the best $n$ out of the $T$
samples. Our goal is to design an online selection rule that
approximates $Q^{\OPT}_{T,n}$ in expectation over the $T$ samples. We
use two measures of approximation --- the ratio of the expected
quality achieved by the offline optimum to that achieved by the online
selection rule, $E[Q^{\OPT}_{T,n}]/E[Q_{T,n}]$, and the
expectation of the ratio of the qualities of the two rules,
$E[Q^{\OPT}_{T,n}/Q_{T,n}]$.  Here the expectations are taken
over the distribution from which the sample is drawn. The
approximation ratios are always at least $1$ and our goal is to show
that they are bounded from above by a constant independent of $n$. In
this case we say that the corresponding selection rule is optimal.


To put this in context, consider the setting studied by Broder et
al.~\cite{Hiring}. Each sample is associated with a quality in the
range $[0,1]$, and the goal is to maximize the average quality of the
subsample we pick. Broder et al. show (implicitly) that if the quality
is distributed uniformly in $[0,1]$ a natural \emph{select above the
mean} rule is optimal to within constant factors with respect to the
optimal offline algorithm that has the same selection rate as the
rule. The same observation holds also for the
\emph{select above the median} rule. Both of these rules are adaptive
in the sense that the next selection decision depends on the samples
seen so far. In more general settings, adaptive rules of this kind can
require unbounded space to store information about samples seen
previously.  For example, consider the following 2-dimensional skyline
problem: each sample is a point in a unit square; the quality of a
single point $(x,y)$ is the area of its ``shadow'' $[0,x]\times[0,y]$,
and the quality of a set of points is the area of the collective
shadows of all the points; the goal is to pick a subsample with the
largest shadow. In this case, a natural selection rule is to select a
sample if it falls out of the shadow of the previously seen
points. However implementing this rule requires remembering on average
$O(\log n)$ samples out of $n$ samples seen~\cite{ave-maxima}. We
therefore study non-adaptive selection rules.

We focus in particular on so-called ``threshold rules'' for
selection. A threshold rule specifies a criterion or ``threshold''
that a candidate must satisfy to get selected. Most crucially, the
threshold is determined \textit{a priori} given a desired selection
rate; it depends only on the number of samples picked so far and is
otherwise independent of the samples seen or picked. Threshold rules
are extremely simple oblivious rules and can, in particular, be
``hard-wired'' into the selection process. This suggests the following
natural questions. When are threshold rules optimal for online
selection problems? Does the answer depend on the desired rate of
selection? We answer these questions in three different settings in
this paper.

The first setting we study is a single-dimensional-quality setting
similar to Broder et al.'s model. In this setting, we study threshold
rules of the form {\em ``Pick the next sample whose quality exceeds
$f(i)$''} where $i$ is the number of samples picked so far. We show
that for a large class of functions $f$ these rules give constant
factor approximations. Interestingly, our threshold rules are optimal
in an almost distribution-independent way. In particular, every rule
$f$ in the aforementioned class is simultaneously constant-factor
optimal with respect to any ``power law'' distribution, and the
approximation factor is independent of the parameters of the
distribution. In contrast, Broder et al.'s results hold only for the
uniform distribution\footnote{While Broder et al.'s result can be
extended to any arbitrary distribution via a standard tranformation
from one space to another, the resulting selection rule becomes
distribution dependent, e.g., ``select above the mean'' is no longer
``select above the mean'' w.r.t. the other distribution upon applying
the transformation.}.

In the second setting, samples are nodes in a rooted infinite-depth
tree. Each node is said to cover all the nodes on the unique path from
the root to itself. The quality of a collection of nodes is the total
number of distinct nodes that they collectively cover. This is
different from the first setting in that the quality defines only a
partial order over the samples. Once again, we study threshold rules
of the form {\em ``Pick the next sample whose quality exceeds
$f(i)$''} and show that they are constant factor optimal.

Our third setting is a generalization of the skyline problem described
previously. Specifically, consider a domain $X$ with a probability
measure $\mu$ and a partial ordering $\prec$ over it. For an element
$x\in X$, the ``shadow'' or ``downward closure'' of $x$ is the set of
all the points that it dominates in this partial ordering, $\calD(x)
= \{y: y\prec x\}$; likewise the shadow of a subset $S\subseteq X$ is
$\calD(S) = \cup_{x\in S} \calD(x)$. Once again, as in the second
setting, we can define the coverage of a single sample to be the
measure of all the points in its shadow. However, unlike the tree
setting, here it is usually easy to obtain a constant factor
approximation to coverage---the maximum coverage achievable is $1$
(i.e. the measure of the entire universe), whereas in many cases
(e.g. for the uniform distribution over the unit square) a single
random sample can in expectation obtain constant coverage. We
therefore measure the quality of a subsample $S\subset X$ by its
``gap'',
$\Gap(S)= 1 - \mu(\calD(S))$. In this setting, rules that place a
threshold on the quality of the next sample to be selected are not
constant-factor optimal. Instead, we study threshold rules of the form
{\em ``Pick the next sample $x$ for which $\mu(\calU(x))$ is at most
$f(i)$''}, where $\calU(x) = \{y: x\prec y\}$ is the set of all
elements that dominate $x$, or the ``upward closure'' of $x$, and show
that these rules obtain constant factor approximations.

\subsection{Related work}

As mentioned earlier, our work is inspired by and extends the work of
Broder et al.~\cite{Hiring}. Broder et al. consider a special case of
the one-dimensional selection problem described above. They assume
that the quality of a sample is distributed uniformly over the
interval $(0,1)$; this assumption is not without loss of generality.
They analyze two adaptive selection rules---\emph{select above the
  mean}, and \emph{select above the median}---and show that both are
constant-factor optimal , although they lead to different growth
rates. These rules are adaptive in the sense that the next selection
decision depends on the quality of the samples accepted so far.  Note
that the \emph{select above the median} rule requires the algorithm to
remember all of the samples accepted so far, and is therefore a
computationally intensive rule. Even the relatively simpler
\emph{select above the mean} rule requires remembering the current
mean and number so far accepted. In contrast we show
(Section~\ref{sec:one-dim}) that there exists a class of simple
non-adaptive selection strategies that also achieves optimality and
includes rules with selection rates equal to those of the ones studied
by Broder et al. These strategies make decisions based only on the
number hired so far. Furthermore we extend these results to more
general coverage problems.

Our third setting is closely related to the skyline problem that has
been studied extensively in online settings by the database community
(see, for example, \cite{all-skyline-probs} and references
therein). Kung, et al.~\cite{finding-maxima} gave an offline
divide-and-conquer algorithm that finds the skyline of a given set of
vectors in $d$-dimensional space. Their algorithm uses $O(n \log_2 n)$
comparisons of vector components when $d = 2,3$ and $O(n (\log_2
n)^{d-2})$ when $d \geq 4$. The implementation of a \emph{Skyline}
query for database systems was recently introduced by
\cite{skyline-operator}. The closest in spirit to our work is
\cite{prob-skyline-sliding}. They considered a stream of uncertain
objects to model uncertainty in measurement. Each object has an
associated set of possible instances and they are interested in the
objects whose probability of being dominated by another object is at
most some $q$ supplied by the database user. 


Online sample selection is closely related to secretary problems,
however there are some key differences. In secretary problems (see,
e.g., \cite{Fer89,Fre83,Sam91}) there is typically a fixed bound on
the desired number of hires. In our setting the selection process is
ongoing and we must pick more and more samples as time passes. This
makes the tradeoff between the rate of hiring and the rate of
improvement of quality interesting.

Finally, while our goal is to analyze a class of online algorithms in
comparison to the optimal offline algorithms, our approach is
different from the competitive analysis of online algorithms
\cite{BE-Y}. In competitive analysis the goal is to perform nearly as
well as the optimal offline algorithm for {\em any arbitrary} sequence
of input. In contrast, we bound the {\em expected} competitive ratio
of the rules we study. Furthermore, a crucial aspect of the strategies
that we study is that not only are they online, but they are also
non-adaptive or oblivious. That is, the current acceptance threshold
does not depend on the samples seen by the algorithm so far. In this
sense, our model is closer in spirit to work on oblivious algorithms
(see, e.g., \cite{univ-tsp, obl-routing, obl-nd}). Oblivious
algorithms are highly desirable in practical settings because the
rules can be hard-wired into the selection process, making them 
very easy
to implement. The caveat is, of course, that for many
optimization problems oblivious algorithms do not provide good
approximations. Surprisingly, we show that in many scenarios related
to sample selection, obliviousness has only a small cost.

\section{Models and results}
\label{sec:model}
Let $X$ be a domain with probability measure $\mu$ over it. A {\em
threshold rule} $\calS$ is specified by a sequence of subsets of $X$
indexed by 
$\mathbb{N}$:
$X=\calS_0\supseteq \calS_1\supseteq \calS_2 \supseteq \cdots
\supseteq \calS_n \supseteq \cdots$. A sample is selected if it belongs to
$\calS_i$ where $i$ is the number of samples previously selected.

Let $\T$ be an infinite sequence of samples drawn i.i.d. according to
$\mu$. Let $\T^\calS(n)$ denote the prefix of $\T$ such that the last
sample on this prefix is the $n$th sample chosen by the threshold rule
$\calS$; let $T^\calS_n$ denote the length of this prefix.
We drop the superscript and the subscript when they are clear from the
context. The ``selection overhead'' of a threshold rule as a function
of $n$ is the expected waiting time to select $n$ samples, or
$E[T^\calS_n]$, where the expectation is over $\T$.

Let $Q$ be a function denoting ``quality''. Thus $Q(x)$ denotes the
quality of a sample $x$ and $Q(S)$ the aggregate quality of a set
$S\subset X$ of samples. $Q(x)$ is a random variable and we assume
that it is drawn from a known distribution. Let $Q^*_{\T,n}$ denote
the quality of an optimal subset of $n$ out of a set $\T$ of samples
with respect to measure $Q$. We use $Q^*_{n}$ as shorthand for
$Q^*_{\T^\calS(n),n}$ where $\calS$ is clear from the context. Let
$Q^\calS_{\T^\calS(n),n}$ ($Q_{n}$ for short) denote the quality of a
sample of size $n$ selected by threshold rule $\calS$ with respect to
measure $Q$.

We look at both maximization and minimization problems. For
maximization problems we say that a threshold rule $\calS$ achieves a
{\em competitive ratio of $\alpha$ in expectation} with respect to $Q$
if for all $n$,
\[ E_{\T \sim \mu}\left[ \frac{Q^*_{n}}{Q_{n}} \right] \le \alpha\]

\noindent
Likewise, $\calS$ {\em $\alpha$-approximates expected quality} with
  respect to $Q$ if for all $n$,
\[\frac{E_{\T \sim \mu}[Q^*_{n}]}{E_{\T \sim \mu}[Q_{n}]} \le \alpha\]

\noindent
For minimization problems, the ratios are defined similarly:
\[ \text{Exp. comp. ratio} = \max_n E_{\T \sim \mu}\left[ \frac{Q_{n}}{Q^*_{n}} \right]; 
\hfill
\text{Approx. to exp. quality} = \max_n \frac{E_{\T \sim \mu}[Q_{n}]}{E_{\T \sim \mu}[Q^*_{n}]} \]

\noindent
We now describe the specific settings we study and the results we obtain.

\paragraph{Model 1: Unit interval (Section~\ref{sec:one-dim}).}
Our first setting is the one-dimensional setting studied by Broder et
al.~\cite{Hiring}. Specifically, each sample is associated with a
quality drawn from a distribution over the unit line. Our measure of
success is the mean quality of the subsample we select. Note that in
the context of approximately optimal selection rules this is a weak
notion of success. For example when $\mu$ is the uniform distribution
over $[0,1]$, even in the absence of any selection rule we can achieve
a mean quality of $1/2$, while the maximum achievable is $1$. So
instead of approximately maximizing the mean quality, we approximately
minimize the {\em mean quality gap}---$\Gap(S) = 1-(\sum_{x\in
S}x)/|S|$---of the subsample.

We focus on {\em power-law} distributions on the unit line,
i.e. distributions with c.d.f. $\mu(1-x)=1-x^k$ for some constant $k$,
and study threshold rules of the form $\calS_i=\{x : x\ge 1-c_i\}$
where $c_i = \Omega(1/\textrm{poly}(i))$. We show that these threshold
rules are constant factor optimal simultaneously for {\em any}
power-law distribution. Remarkably, this gives an optimal selection
algorithm that is oblivious of even the underlying
distribution. Formally we obtain the following result.

\begin{theorem}
\label{thm:one-dimensional}
For the unit line equipped with a power-law distribution, any
threshold rule $\calS_i=\{x : x\ge 1-c_i\}$, where $c_i =
1/i^\alpha$ with $0 \leq \alpha < 1$ for all $i$, achieves an $O(1)$
approximation to the expected gap, where the constant in the $O(1)$
depends only on $\alpha$ and not on the parameters of the
distribution.
\end{theorem}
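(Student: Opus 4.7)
The plan is to bound $E[\Gap(\calS_n)]$ from above and $E[\Gap^*_n]$ from below, verifying both are $\Theta(n^{-\alpha})$ with matching $k$-dependent prefactors that cancel in the ratio. For the upper bound, each selected sample $x_i$ is drawn from the power-law distribution conditioned on $x_i \geq 1 - c_i$; the conditional density of $y_i := 1-x_i$ on $[0, c_i]$ is $k y^{k-1}/c_i^k$, giving $E[y_i] = k c_i/(k+1)$. Averaging over $i$ yields $E[\Gap(\calS_n)] = (k/(k+1)) \cdot (1/n) \sum_{i=1}^n i^{-\alpha} = O\bigl((k/((k+1)(1-\alpha))) \cdot n^{-\alpha}\bigr)$.

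For the lower bound, first compute the expected selection time. The waiting time between the $(i-1)$-st and $i$-th selection is geometric with success probability $c_i^k = i^{-\alpha k}$, so $E[T_n] = \sum_{i=1}^n i^{\alpha k} = \Theta(n^{1+\alpha k}/(1+\alpha k))$. A Chernoff-type bound for sums of independent geometrics (e.g., Janson's inequality) gives sub-exponential concentration of $T_n$ about its mean. Set $M = (1+\epsilon) E[T_n]$ for $\epsilon$ to be chosen small. On $\{T_n \leq M\}$, the first $T_n$ samples are contained in $\{x_1,\ldots,x_M\}$, so $\Gap^*_n \geq \Gap^{**}_{n,M}$, where $\Gap^{**}_{n,M}$ denotes the minimum $n$-subset gap of the first $M$ i.i.d.\ samples. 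For this fixed-size problem, $y_{(i)}^k \sim \mathrm{Beta}(i, M-i+1)$ gives $E[y_{(i)}] = \Gamma(i+1/k)\Gamma(M+1)/(\Gamma(i)\Gamma(M+1+1/k)) = \Theta((i/M)^{1/k})$ uniformly in $i, M, k$. Summing, $E[\Gap^{**}_{n,M}] = \Theta((n/M)^{1/k}/(1+1/k)) = \Theta((1+\alpha k)^{1/k}/((1+\epsilon)^{1/k}(1+1/k) n^\alpha))$. Choosing $\epsilon$ small enough that concentration dominates the loss from $(1+\epsilon)^{1/k}$, and using $\Gap^{**}_{n,M} \leq 1$ together with $E[\Gap^*_n] \geq E[\Gap^{**}_{n,M}] - P(T_n > M)$, we obtain $E[\Gap^*_n] = \Omega\bigl((1+\alpha k)^{1/k}/((1+1/k) n^\alpha)\bigr)$.

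Combining, the ratio $E[\Gap(\calS_n)]/E[\Gap^*_n]$ is $O\bigl((k/(k+1))(1+1/k)/((1-\alpha)(1+\alpha k)^{1/k})\bigr)$. Using the algebraic identity $(k/(k+1)) \cdot (1+1/k) = 1$, this simplifies to $O\bigl(1/((1-\alpha)(1+\alpha k)^{1/k})\bigr)$. The function $k \mapsto (1+\alpha k)^{1/k}$ is monotone decreasing from $e^\alpha$ (as $k \to 0^+$) to $1$ (as $k \to \infty$), so the bound lies in $[1/(1-\alpha), e^\alpha/(1-\alpha)]$, depending only on $\alpha$, as claimed.

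The trickiest step is the lower bound: conditioning on $T_n = T$ distorts the distribution of $(x_1, \ldots, x_T)$, so order-statistic formulas cannot be invoked conditionally, which is why we reduce to a deterministic-size i.i.d.\ sample via the inclusion argument. Moreover, for small $k$ the quantity $(n/M)^{1/k}$ is highly sensitive to the multiplicative slack in $M$, so simple Markov bounds are insufficient; one needs $P(T_n > (1+\epsilon) E[T_n])$ negligible for $\epsilon \to 0$ at an appropriate rate, which requires sharper sub-exponential concentration of the heterogeneous geometric sum. Verifying this concentration bound and the uniform-in-$k$ asymptotics of the Gamma-function ratio are the remaining technical items.
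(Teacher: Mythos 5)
Your upper bound on $E[\Gap(\calS_n)]$ is essentially the paper's (equation~(\ref{eq:gap}) with $E[\barY]=k/(k+1)$), but your lower bound on $E[\Gap^*_n]$ takes a genuinely different and more careful route. The paper applies Markov's inequality to condition on $\{T_{n+1}\le 2E[T_{n+1}]\}$ and then directly invokes the fixed-size order-statistic formula (Lemma~\ref{fact:opt-mean-gap}) under that conditioning; as you note, this glosses over the fact that conditioning on the stopping time distorts the joint distribution of the observed samples, so the order-statistic formula does not apply verbatim. Your reduction --- on $\{T_n\le M\}$ the adversary's pool is a subset of the first $M$ i.i.d.\ draws, so $\Gap^*_n\ge\Gap^{**}_{n,M}$, and then $E[\Gap^*_n]\ge E[\Gap^{**}_{n,M}]-\Pr[T_n>M]$ since the gap is at most $1$ --- sidesteps this cleanly. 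You are also right that a Markov bound with $M=2E[T_n]$ is too weak when $k$ is small, since $(n/M)^{1/k}$ is then extremely sensitive to the multiplicative slack in $M$ and the $\Pr[T_n>M]\le 1/2$ error term swamps the main term (which scales like $n^{-\alpha}$); taking $\epsilon=\Theta(k)$ and invoking a Janson-type tail bound for the heterogeneous geometric sum $T_n$ is the right fix. Indeed, your analysis exposes a real issue with the paper's numerics: the prefactor $(1/2)^{1+1/k}/(1+1/k)$ in the paper's penultimate display drops below $1/16$ as soon as $k<1$ (e.g.\ $k=1/2$ gives $1/24$), so the stated bound $E[\Gap^\OPT_{n+1}]\ge\frac{1}{16}(n/E[T_n])^{1/k}$ fails for small $k$ unless the prefactor is kept symbolic. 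Your derivation retains the $k/(k+1)$ factor in the upper bound and the $1/(1+1/k)$ factor in the lower bound, exploits the identity $\frac{k}{k+1}\cdot(1+\frac{1}{k})=1$, and uses $1\le(1+\alpha k)^{1/k}\le e^\alpha$ to get a ratio that genuinely depends only on $\alpha$ --- this is the cancellation the paper needed but obscured by bounding $E[\barX]\le 1$. Two items you flag as remaining do need to be filled in for a complete proof: (i) the concentration estimate for $T_n$ (Janson's bound $\Pr[T_n\ge(1+\epsilon)E[T_n]]\le\exp\{-p_*E[T_n](\epsilon-\ln(1+\epsilon))\}$ with $p_*=n^{-\alpha k}$ and $p_*E[T_n]\approx n/(1+\alpha k)$ suffices), and (ii) uniform-in-$k$ asymptotics of the Gamma ratio. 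One caveat worth stating explicitly: since your concentration failure probability must be dominated by a main term that itself shrinks like $\Theta(k)$ when $k\to 0$, the $\alpha$-only constant kicks in only once $n$ exceeds a $k$-dependent threshold (roughly $n\gtrsim k^{-2}\log(1/k)$); the theorem is true in this asymptotic sense, and the same implicit caveat applies to the paper.
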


\paragraph{\bf Dominance and shadow.} For the next two settings, we
need some additional definitions. Let $\prec$ be a partial order over
the universe $X$. As defined earlier, the shadow of an element $x\in
X$ is the set of all the points that it dominates, $\calD(x) = \{y:
y\prec x\}$; likewise the shadow of a sample $S\subseteq X$ is
$\calD(S) = \cup_{x\in S} \calD(x)$. Let $\calU(x) = \{y :
x \prec y\}$ be the set of points that shadow $x$; $\calU(S)$ for a
set $S$ is defined similarly. Note that $\calU(x)$ is a subset of
$X\setminus\calD(x)$ and $\mu(\calU(x))$ is the probability that
a random sample covers $x$.

\paragraph{\bf Model 2: Random tree setting (Section~\ref{sec:tree}).}
While the previous setting was in a continuous domain, next we
consider a discrete setting, where the goal is to maximize the
cardinality of the shadow set. Specifically, our universe $X$ is the
set of all nodes in an rooted infinite-depth binary tree. The
following random process generates samples. Let $0 < p < 1$. We start
at the root and move left or right at every step with equal
probability. At every step, with probability $p$ we terminate the
process and output the current node. A node $x$ in the tree dominates
another node $y$ if and only if $y$ lies on the unique path from the
root to $x$. For a set $S$ of nodes, we define coverage as $\Cover(S)
= |\calD(S)|$. Note, that unlike in the previous setting, there is
no notion of a gap in this setting.

Once again the threshold rules we consider here are based on sequences
of integers $\{c_i\}$. For any such sequence, we define $\calS_i = \{
x : |\calD(x)|\ge c_i\}$. We show that constant-factor optimality
can be achieved with exponential or smaller selection overheads.




\begin{theorem}
\label{thm:tree}
For the binary tree model described above, any threshold rule based on
a sequence $\{c_i\}$ with $c_i=O({\mbox {poly}}(i))$ achieves an
$O(1)$ competitive ratio in expectation with respect to coverage, as
well as an $O(1)$ approximation to the expected coverage.
\end{theorem}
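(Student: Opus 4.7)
\begin{proofsketch}
The plan is to sandwich $\Cover^*_n$ from above and $\Cover_n$ from below in terms of the sequence $\{c_i\}$, and verify the two bounds agree up to a constant factor.

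First I compute the selection overhead. A fresh random walk reaches depth at least $c_i$ (so that $|\calD(x)|\geq c_i$) with probability $(1-p)^{c_i}$, hence $E[T^\calS_n]=\sum_{i=0}^{n-1}(1-p)^{-c_i}$, which is at most $\exp(O(\mathrm{poly}(n)))$ when $c_i$ is polynomial. Applying Jensen's inequality to the maximum of $T$ i.i.d.\ geometric depths yields $E[D_{\max}]\leq \log E[T^\calS_n]/\log(1/(1-p))=O(c_n)$, where $D_{\max}$ is the maximum depth in $\T^\calS(n)$. Since any $n$ root-paths cover at most $n(D_{\max}+1)$ nodes, $E[\Cover^*_n]=O(nc_n)$.

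For the lower bound on $\Cover_n$ I exploit the fact that $\calD(x_i)$ is a chain: the overlap $\calD(x_i)\cap\bigcup_{j<i}\calD(x_j)$ is a downward-closed prefix of $\calD(x_i)$ of length $L_i=\max_{j<i}|\calD(x_i)\cap\calD(x_j)|$. The common prefix length of two independent root walks is stochastically dominated by a $\mathrm{Geom}(1/2)$ random variable, so $E[L_{ij}]=O(1)$ and $E[L_i]=O(\log i)$ by a maximum-of-$i$-geometrics bound. Telescoping, $E[\Cover_n]=\sum_i E[|\calD(x_i)|-L_i]\geq \sum_i(c_i-O(\log i))$, which for polynomial $c_i$ growing faster than logarithmically is $\Omega(nc_n)$, matching the upper bound.

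The regime $c_i=O(\log i)$ must be handled separately because the ``depth minus overlap'' bound becomes vacuous there. In this regime $E[T^\calS_n]$ is only polynomial in $n$, so I compare directly to the coverage of the full pool: $\Cover^*_n\leq\Cover(\T^\calS(n))$, and a node-by-node calculation of $\Pr[v\text{ covered by some }x_i]$ (summed over all nodes $v$) shows that the selected samples already cover a constant fraction of the nodes reached by the full pool, giving $E[\Cover_n]=\Omega(E[\Cover(\T^\calS(n))])\geq\Omega(E[\Cover^*_n])$.

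For the expected competitive ratio $E[\Cover^*_n/\Cover_n]$ (as opposed to the ratio of expectations), the above bounds hold pointwise with high probability via concentration of $T^\calS_n$, $D_{\max}$, and $\Cover_n$ around their means, and the remaining tail events contribute only $O(1)$ to the expectation. The main technical hurdle is the lower bound on $\Cover_n$: the overlap terms $L_i$ grow logarithmically with $i$, which forces the case analysis between polynomial-growth and slow-growth $c_i$.
\end{proofsketch}
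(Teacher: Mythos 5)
Your route is genuinely different from the paper's. The paper's proof proceeds by a deterministic decomposition of the offline optimum: since $\calD(O_n)\subseteq\calD(S_n)\cup\calD(R_n)$, it bounds $\Cover^*_n \leq |\calD(O_n)\cap\calD(R_n)| + \Cover_n$, and then counts nodes in $\calD(O_n)\cap\calD(R_n)$ by splitting the tree at level $\log n$ (giving $\leq 2n + nf(n)$ with $f(n)=c_n-\log n$). For the lower bound it considers only the last $n/2$ selections and runs a balls-and-bins argument at level $\log(n/2)$ to show $\Omega(n)$ of those paths are pairwise disjoint below that level, so $\Cover_n=\Omega(nf(n/2))$ with constant probability. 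You instead bound $E[\Cover^*_n]$ by $n\,E[D_{\max}+1]$ via a max-of-geometrics estimate, and lower bound $E[\Cover_n]$ by telescoping $\sum_i(|\calD(x_i)|-L_i)$ where $L_i$ is the longest common prefix with an earlier selection. The two lower bounds are morally the same balls-and-bins fact expressed differently; your version is more explicitly quantitative. What the paper's decomposition buys you, however, is a \emph{pointwise} inequality $\Cover^*_n/\Cover_n\leq 1 + n(2+f(n))/\Cover_n$, which makes the expected-competitive-ratio bound (the $E[\Cover^*_n/\Cover_n]$ part of the theorem) a one-step consequence of concentration of $\Cover_n$ alone. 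Your $n\,E[D_{\max}]$ bound is an expectation statement, and the claim that ``the above bounds hold pointwise with high probability and the tail contributes $O(1)$'' is not free: $D_{\max}$, $T^\calS_n$, and $\Cover_n$ are all random and correlated, and you'd need to verify that on the tail event (where, say, $\Cover_n$ is small) the ratio does not blow up faster than the tail probability shrinks. This is doable but is real work that your sketch elides, and it is precisely what the paper's decomposition sidesteps.

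The sharper gap is the slow-growth regime. Your idea of comparing to the whole pool and arguing ``node by node'' that $E[\Cover_n]=\Omega(E[\Cover(\T^\calS(n))])$ is left as an assertion, and it is not clear it goes through node by node: for a node $v$ at depth just below $c_n$, the pool contains roughly $T^\calS_n\approx\sum_i(1-p)^{-c_i}$ independent chances to cover $v$, whereas $S_n$ contains only $n$, each conditioned to be deep; making the ratio $\Omega(1)$ requires using the slow-growth condition $c_i\leq c_{i/2}+O(1)$ quantitatively, which your sketch never invokes. The paper's treatment of this case is simpler and more robust: since every rejected path has depth $<c_n$, $\calD(R_n)$ lives entirely in the top $c_n$ levels, so $|\calD(R_n)|\leq 2^{c_n+1}$ deterministically and hence $\Cover^*_n\leq 2^{c_n+1}+\Cover_n$; meanwhile a balls-and-bins argument at level $c_{n/2}$ gives $\Cover_n\geq\Omega(2^{c_{n/2}})$ with constant probability, and the slow-growth hypothesis ensures $2^{c_n}/2^{c_{n/2}}=O(1)$. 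I would suggest adopting that argument for your slow case. Finally, a small caution on your upper bound in the fast case: $E[D_{\max}]=\Theta(c_n+\log_{1/(1-p)}n)$ rather than $O(c_n)$ in general, and passing from the max of a \emph{random number} $T^\calS_n$ of geometrics to $\log E[T^\calS_n]$ needs a Wald/union-bound step rather than a bare Jensen; the conclusion $E[\Cover^*_n]=O(nc_n)$ holds once $c_n\geq\log n$, which is consistent with your case split, but state that dependence explicitly.
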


\paragraph{\bf Model 3: Skyline problem (Section~\ref{sec:two-dim}).}
Finally, we consider another continuous domain that is a
generalization of the skylike problem mentioned previously. We are
interested in selecting a set of samples with a large
shadow. Specifically, we define 
the ``gap'' of $S$ to be $\Gap(S) = 1-\mu(\calD(S))$. Our goal is to
minimize the gap.

We show that a natural class of threshold rules obtains near-optimal
gaps in this setting. Recall that $\calU(x)$ for an element $x\in
X$ denotes the set of elements that dominate $x$. We consider
threshold rules of the form $\calS_i=\{x\in
X : \mu(\calU(x))\le c_i\}$ for some sequence of numbers
$\{c_i\}$. We require the following continuity assumption on the
measure $\mu$.
\begin{definition}
{\bf (Measure continuity)} For all $x\in X$ and $c\in
[0,\mu(\calU(x))]$, there exists an element $y\in\calU(x)$ such
that $\mu(\calU(y))=c$. Furthermore, there exist elements
$\underline{x},\overline{x}\in X$ with $\calU(\underline{x})=X$ and
$\calU(\overline{x})=\emptyset$.
\end{definition}

\noindent
Measure continuity ensures that the sets $\calS_i$ are all non-empty
and proper subsets of each other.




\begin{theorem}
\label{thm:skyline-arbitrary}
For the skyline setting with an arbitrary measure satisfying measure
continuity, any threshold rule based on a sequence $\{c_i\}$ with
$c_i=i^{-(1/2-\Omega(1))}$
achieves a $1+o(1)$ competitive ratio in
expectation with respect to the gap.
\end{theorem}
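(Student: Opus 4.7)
The plan is to sandwich both $E[\Gap_n]$ and $E[\Gap^*_n]$ between essentially the same quantity via explicit per-point uncovered probabilities, and then upgrade the resulting ratio-of-expectations bound to an expected-ratio bound via concentration. Throughout, write $F(c) := \mu(\calS_c)$, $T_n := T^\calS_n$, $u(y) := \mu(\calU(y))$, and $a_i(y) := \mu(\calU(y)\cap \calS_i)$.

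Because the sample stream is i.i.d.\ and the rule is oblivious, the $i$-th pick is distributed as $\mu$ restricted to $\calS_{i-1}$, independently of the others, giving
\[
\Pr[y\text{ uncov.\ by online}]=\prod_{i=0}^{n-1}\left(1-\frac{a_i(y)}{F(c_i)}\right).
\]
Decomposing the stream into $n$ geometric phases with $N_i\sim\mathrm{Geom}(F(c_i))$ samples each (the first $N_i{-}1$ from $\mu$ restricted to $X\setminus\calS_i$, the last from $\mu$ restricted to $\calS_i$) and summing the geometric series yields the companion identity
\[
\Pr[y\text{ uncov.\ by all } T_n\text{ samples}]=\prod_{i=0}^{n-1}\frac{F(c_i)-a_i(y)}{F(c_i)+u(y)-a_i(y)}.
\]
Monotonicity of $\Gap$ in the sample set gives $\Gap^*_n\geq \Gap(\T^\calS(n))$, so the second identity lower-bounds $E[\Gap^*_n]$.

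The key observation is that these two expressions agree on the ``deep'' region $A:=\calS_{c_{n-1}}$ and are both super-polynomially small on $B:=X\setminus A$. For $y\in A$, transitivity of $\prec$ forces $\calU(y)\subseteq \calS_i$ for all $i\leq n-1$, so $a_i(y)=u(y)$ and both formulas collapse to $\prod_i(1-u(y)/F(c_i))$; hence their $A$-contributions are identical. For $y\in B$, measure continuity supplies $y'\in\calU(y)$ with $u(y')=c_i$ whenever $c_i\leq u(y)$, and $\calU(y')\subseteq \calU(y)\cap \calS_i$ yields $a_i(y)\geq c_i$; setting $i^*(y):=\max\{i:c_i\geq u(y)\}$, the online uncovered probability is bounded by $\exp(-u(y)\sum_{i\leq i^*}1/F(c_i))\cdot \exp(-\sum_{i>i^*}c_i/F(c_i))$. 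A case analysis on whether $i^*(y)\geq n/2$ shows that under $c_i=i^{-\beta}$ with $\beta=1/2-\Omega(1)$, at least one of the two exponents is $\Omega(n^{1-\beta})=n^{1/2+\Omega(1)}$, yielding a super-polynomial bound; the offline lower bound satisfies the same estimate. The main ($A$) contribution is polynomially large (at least $1/\mathrm{poly}(n)$, using $F(c)\geq c$, which follows from measure continuity by taking any $y$ with $u(y)=c$ and noting $\calU(y)\subseteq \calS_c$). Consequently $E[\Gap_n]\leq (1+o(1))\,E[\Gap(\T^\calS(n))]\leq (1+o(1))\,E[\Gap^*_n]$.

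Finally I would upgrade the ratio of expectations to an expected ratio via concentration. Since $T_n$ is a sum of independent geometrics and $F(c_i)\geq c_i$, the exponent hypothesis $c_i=i^{-(1/2-\Omega(1))}$ gives $\mathrm{Var}(T_n)/E[T_n]^2=O(n^{-\Omega(1)})$, so $T_n$ concentrates sharply. A bounded-differences/Efron--Stein argument on the underlying i.i.d.\ samples shows that $\Gap_n$ and $\Gap(\T^\calS(n))$ each lie within a $(1\pm o(1))$-factor of their expectations with probability $1-n^{-\omega(1)}$; on the exceptional event $\Gap_n\leq 1$ and $\Gap^*_n\geq \Gap(\T^\calS(n))$ is bounded below by a $1/\mathrm{poly}(n)$ anti-concentration estimate, so its contribution to $E[\Gap_n/\Gap^*_n]$ is negligible. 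The \textbf{main obstacle} is exactly this last concentration step: $\Gap^*_n$ is a minimum over $\binom{T_n}{n}$ subsets and is therefore a priori a sensitive function of the samples, so one must carefully argue that replacing a single sample swaps out at most $O(1)$ of the optimal elements and changes the gap by at most the typical shadow measure of a single sample. The super-polynomial bad-region bound in the middle paragraph also uses the exponent threshold $\beta<1/2$ tightly and is the source of the specific constant in the theorem's hypothesis.
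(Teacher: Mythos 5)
Your approach and the paper's diverge at the very first step. The paper never touches per-point uncovered probabilities or a ratio-of-expectations comparison; instead, Lemma~\ref{lem:skyline-OPT} observes that $\Gap_n = \Gap^*_n$ unless the event $\E_n$ occurs that the offline optimum $O_n$ contains a rejected point, and that when $\E_n$ does occur one gets the \emph{deterministic} floor $\Gap^*_n > c_n$ (because an undominated rejected point $x \in O_n$ has $\calU(x) \cap \calD(O_n) = \emptyset$, so $\Gap(O_n) \geq \mu(\calU(x)) > c_n$). This immediately yields $E[\Gap_n/\Gap^*_n] \leq 1 + \Pr[\E_n]/c_n$, reducing the theorem to the probability estimate $\Pr[\E_n] = o(c_n)$, which is handled by a Chernoff bound on $|S_n \cap \calS_n|$ (Lemma~\ref{lem:weak-const-frac}) and a union bound over the rejected samples (Lemma~\ref{lem:dominate-rejects}). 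The mechanism is entirely combinatorial, with no integral comparison and no appeal to concentration of $\Gap^*_n$.

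The genuine gap in your proposal is exactly the one you flag at the end, and it is not a routine loose end. You need to convert $E[\Gap_n]/E[\Gap^*_n] = 1 + o(1)$ into $E[\Gap_n/\Gap^*_n] = 1 + o(1)$, and this needs an anti-concentration \emph{lower} bound on $\Gap^*_n$ that is strong enough to control the ratio on the bad event. Bounded-differences gives upper-tail deviation control, not a lower bound, and $\Gap^*_n$ is the minimum gap over all size-$n$ subsets of $T_n$, so a single unusually deep sample can move it sharply; it is not clear you can rule out $\Gap^*_n$ being much smaller than its expectation with the required $n^{-\omega(1)}$ failure probability for a general measure. Even if you could, you would still need that on the residual bad event the ratio $\Gap_n / \Gap^*_n$ integrates to $o(1)$, which requires a pointwise lower bound on $\Gap^*_n$ — and that is precisely what Lemma~\ref{lem:skyline-OPT} supplies for free: whenever the two gaps differ, $\Gap^*_n > c_n$ holds surely, not just in expectation or with high probability. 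Without this observation (or a substitute for it) the proposal does not close. Your identities for the two uncovered probabilities and the agreement on the region $\{y : \mu(\calU(y)) \leq c_{n-1}\}$ are correct and could plausibly establish the weaker $\alpha$-approximation-of-expected-quality guarantee, but the expected competitive ratio asked for by the theorem needs the deterministic floor that your route lacks.
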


We note that the class of functions $c_i$ specified in the above
theorem includes all functions for which $1/c_i$ grows
subpolynomially. In particular, this includes threshold rules with
selection overheads that are slightly superlinear.

For the special case of the skyline setting over a two-dimensional
unit square $[0,1]^2$ bestowed with a product distribution and the
usual precedence ordering---$(x_1,y_1)\prec (x_2,y_2)$ if and only if
$x_1\le x_2$ and $y_1\le y_2$---we are able to obtain a stronger
result that guarantees constant-factor optimality for any polynomial
selection overhead:
\begin{theorem}
\label{thm:skyline-product}
For the skyline setting on the unit square with any product
distribution, any threshold rule based on a sequence $\{c_i\}$ with
$c_i=\Omega(1/{\mbox {poly}}(i))$ achieves a $1+o(1)$ competitive ratio in
expectation with respect to the gap.
\end{theorem}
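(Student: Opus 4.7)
The plan is to use the product structure to reduce to the uniform measure on $[0,1]^2$ and then exploit monotonicity of $\Gap$ in set inclusion to sandwich $Q_n$ between two quantities that agree up to a $1+o(1)$ factor in expectation. Specifically, apply the coordinate-wise marginal CDF transformation $(x,y)\mapsto(F_1(x),F_2(y))$, which preserves $\prec$ and turns $\mu$ into the uniform measure on $[0,1]^2$ with $\mu(\calU(x,y))=(1-x)(1-y)$. The threshold rule becomes: select $(x,y)$ whenever $(1-x)(1-y)\le c_i$. Writing $R_c=\{(x,y):(1-x)(1-y)\le c\}$, a direct calculation gives $\mu(R_c)=c(1+\ln(1/c))$, so for $c_i=\Omega(1/\poly(i))$ the expected selection overhead $E[T^\calS_n]=\sum_{i=1}^n 1/\mu(R_{c_i})$ is polynomial in $n$.

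The key monotonicity observation is that because the thresholds $c_1\ge c_2\ge\cdots$ are decreasing, every sample $s$ arriving in $\T^\calS(n)$ with $\mu(\calU(s))\le c_n$ automatically satisfies the then-current threshold and is selected. Letting $S_n$ denote the selected set, this gives $R_{c_n}\cap\T^\calS(n)\subseteq S_n\subseteq\T^\calS(n)$, and monotonicity of $\Gap$ yields
\[
\Gap(\T^\calS(n))\;\le\;Q_n\;\le\;\Gap(R_{c_n}\cap\T^\calS(n)).
\]
Since enlarging a set only shrinks its gap, $Q^\OPT_n\ge\Gap(\T^\calS(n))$ as well, so the pointwise bound $Q_n/Q^\OPT_n\le\Gap(R_{c_n}\cap\T^\calS(n))/\Gap(\T^\calS(n))$ holds. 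It therefore suffices to show that this latter ratio is $1+o(1)$ in expectation.

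Writing $T=T^\calS_n$ and using $\Gap(S)=\int P[\calU(p)\cap S=\emptyset]\,dp$ together with i.i.d. sampling, expand
\[
E\bigl[\Gap(R_{c_n}\cap\T^\calS(n))-\Gap(\T^\calS(n))\bigr] = E\int_{p\notin R_{c_n}}\bigl((1-\mu(R_{c_n}\cap\calU(p)))^T-(1-\mu(\calU(p)))^T\bigr)\,dp,
\]
where the integrand vanishes on $R_{c_n}$ because $\calU(p)\subseteq R_{c_n}$ there. For $p=(a,b)\notin R_{c_n}$, a direct computation yields $\mu(R_{c_n}\cap\calU(p))=c_n(1+\ln(\mu(\calU(p))/c_n))$; using the density $\ln(1/\tau)$ of the random variable $(1-X)(1-Y)$ under the uniform measure, reduce the two-dimensional integral to a single integral in $\tau\in(c_n,1]$ and compare with the baseline $E[\Gap(\T^\calS(n))]=\Theta(\ln T/T)$.

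The main obstacle will be carrying out this final integral estimate uniformly over all $c_i=\Omega(1/\poly(i))$, and then converting the resulting ratio-of-expectations bound into the expectation-of-ratios demanded by the competitive-ratio definition. With $Tc_n=\omega(1)$, the quantity $(1-c_n(1+\ln(\tau/c_n)))^T$ decays fast enough on the relevant range compared with $(1-\tau)^T$ to give the desired $o(\ln T/T)$ bound on the extra gap; the expectation-of-ratios version then follows by conditioning on the concentration of $T$ and of $\Gap(\T^\calS(n))$ around their means and absorbing the rare tail event with a crude worst-case bound.
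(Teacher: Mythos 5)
Your reduction from product to uniform via the coordinate-wise CDF map is exactly the argument the paper gives for this step. After that, however, you take a genuinely different route. The paper's uniform-case proof conditions on the bad event $\E_n$ that $\Gap_n \neq \Gap^*_n$, observes that on $\E_n$ the offline gap is already $>c_n$ (Lemma~\ref{lem:skyline-OPT}), and then bounds $\Pr[\E_n]=o(c_n)$ by a balls-in-bins/Chernoff argument (Lemmas~\ref{lem:const-frac} and~\ref{lem:dominate-rejects}). You instead introduce a clean sandwich $\Gap(\T)\le Q_n\le\Gap(R_{c_n}\cap\T)$ and $Q^*_n\ge\Gap(\T)$ (valid because thresholds are nested, so $R_{c_n}\cap\T\subseteq S_n$), and propose to estimate the excess gap $\Gap(R_{c_n}\cap\T)-\Gap(\T)$ by a direct integral computation. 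Your formula $\mu(R_{c_n}\cap\calU(p))=c_n\bigl(1+\ln(\mu(\calU(p))/c_n)\bigr)$ and the density $\ln(1/\tau)$ of $(1-X)(1-Y)$ are correct, and this gives a more quantitative, analytic bound than the paper's purely probabilistic argument.

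The gap is in the last step, and it is not merely a matter of tidying up. You show (in outline) that the excess gap is $o(\ln T/T)$ in expectation, which controls the \emph{ratio of expectations}; converting this to the \emph{expectation of the ratio} demanded by the theorem is not a routine concentration argument, because $\Gap(\T)$ can be arbitrarily small and the proposed ``crude worst-case bound'' on $\Gap(R_{c_n}\cap\T)/\Gap(\T)$ is not obviously finite. What rescues the approach is precisely the observation you omit: whenever the excess set $A=\calD(\T)\setminus\calD(R_{c_n}\cap\T)$ is nonempty, the uncovered set has measure at least $c_n$ (take any $p\in A$; then $\calU(p)\cap R_{c_n}$ has measure $\ge c_n$ and lies entirely outside $\calD(\T)$). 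This gives the pointwise bound $\Gap(R_{c_n}\cap\T)/\Gap(\T)\le 1+\mu(A)/c_n$ -- which is exactly the paper's Lemma~\ref{lem:skyline-OPT} in disguise -- and then you need $E[\mu(A)]=o(c_n)$, a \emph{stronger} requirement than the $o(\ln T/T)$ you target. That stronger bound does follow from $E[\mu(A)\mid T=t]\le(1-c_n)^t$ and $E[(1-c_n)^{T_n}]\approx e^{-c_n E[T_n]}=e^{-\Theta(n/\log n)}$, but neither the pointwise bound nor the corrected target appears in your sketch; as written, the plan does not close. With those two additions your integral approach would be a valid and arguably more transparent alternative to the paper's event-based argument.
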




\section{Sample selection in one dimension}
\label{sec:one-dim}
\label{opt-exp-mean-gap:sec}
\newcommand{\barQ}{\overline{Q}}
\newcommand{\barG}{\overline{G}}
\newcommand{\barD}{\overline{D}}
\newcommand{\barH}{\overline{H}}
\newcommand{\barh}{\overline{h}}
\newcommand{\barq}{\overline{q}}
\newcommand{\barg}{\overline{g}}
\newcommand{\bard}{\overline{d}}
\newcommand{\barmu}{\overline{\mu}}
\newcommand{\barX}{\overline{X}}
\newcommand{\barY}{\overline{Y}}
\newcommand{\barx}{\overline{x}}




We will now prove Theorem~\ref{thm:one-dimensional}.
For a (random) variable $x \in [0,1]$, let $\barx$ denote its
complement $1-x$. For a cumulative distribution $\mu$ with domain
$[0,1]$, we use $\barmu$ to denote the cumulative distribution for
the complementary random variable: $\barmu(x)=1-\mu(1-x)$.


Let $Y$ denote a draw from the power-law distribution $\mu$, and $Y_n$
denote the (random) quality of the $n$th sample selected by
$\calS$. Note that since $\mu$ is a power-law distribution, $\barY_i$
is statistically identical to $c_i\barY$. Then, the mean quality gap
of the first $n$ selected samples is given by
$\Gap_n= \frac{1}{n}\sum_{i=1}^n \barY_i = \frac{1}{n}\sum_{i=1}^n
c_i\barY$, and, by linearity of expectation we have 
\begin{equation}
  \label{eq:gap}
  E[\Gap_n] = \frac{E[\barY]}{n}\sum_{i=0}^{n-1} c_i\enspace.
\end{equation}

On the other hand, the following lemma gives the optimal mean quality
achievable when we pick a subsample of size $n$ from a sample of size
$E[T_n]$. See Appendix~\ref{sec:appendix} for the proof.
\begin{lemma}
\label{fact:opt-mean-gap}
The expected mean gap of the largest $n$ out of $E[T_n]$ samples drawn from
a distribution with $\barmu(x) = x^k$ is 
  \begin{equation}
    \label{eq:largest}
    \frac{1}{1 + 1/k} \left( \frac{n}{E[T_n] + 1} \right)^{1/k}\enspace .
  \end{equation}
\end{lemma}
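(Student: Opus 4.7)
The plan is to reduce the problem to the order statistics of a Beta distribution, evaluate the resulting sum of expectations via a telescoping (hockey-stick) identity, and then simplify using the standard asymptotic for ratios of Gamma functions.

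I first pass to the complementary variable $\bar Y_i = 1 - Y_i$, which has c.d.f.\ $\barmu(y) = y^k$ on $[0,1]$. The mean quality gap of the $n$ largest of $T := E[T_n]$ samples equals $\tfrac{1}{n}\sum_{i=1}^n \bar Y_{(i)}$, where $\bar Y_{(1)} \le \cdots \le \bar Y_{(T)}$ are the order statistics. The probability integral transform gives $\bar Y \stackrel{d}{=} U^{1/k}$ with $U \sim \mathrm{Unif}[0,1]$, so $\bar Y_{(i)} \stackrel{d}{=} U_{(i)}^{1/k}$, where $U_{(i)} \sim \mathrm{Beta}(i, T-i+1)$.

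Next I evaluate the sum of expectations. The Beta integral yields
\[
E[U_{(i)}^{1/k}] \;=\; \frac{T!}{(i-1)!\,(T-i)!}\int_0^1 u^{i+1/k-1}(1-u)^{T-i}\,du \;=\; \frac{T!\,\Gamma(i+1/k)}{(i-1)!\,\Gamma(T+1+1/k)},
\]
and I collapse $\sum_{i=1}^n$ via the identity
\[
\sum_{i=1}^n \frac{\Gamma(i+a)}{\Gamma(i)} \;=\; \frac{\Gamma(n+1+a)}{(a+1)\,\Gamma(n)},
\]
which is verified by telescoping: the summand equals $f(i+1)-f(i)$ for $f(i) = \Gamma(i+a)/[(a+1)\Gamma(i-1)]$, with $f(1)=0$. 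Setting $a = 1/k$ and dividing by $n$ gives the \emph{exact} expression
\[
\frac{1}{n}\sum_{i=1}^n E[U_{(i)}^{1/k}] \;=\; \frac{1}{1+1/k}\cdot\frac{\Gamma(T+1)\,\Gamma(n+1+1/k)}{\Gamma(T+1+1/k)\,\Gamma(n+1)}.
\]

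To finish, I apply the standard asymptotic $\Gamma(x+a)/\Gamma(x) \sim x^a$, which gives $\Gamma(n+1+1/k)/\Gamma(n+1)\sim n^{1/k}$ and $\Gamma(T+1)/\Gamma(T+1+1/k)\sim(T+1)^{-1/k}$, yielding the claimed formula $\frac{1}{1+1/k}\bigl(n/(T+1)\bigr)^{1/k}$. The main technical point is controlling the $(1+O(1/n))$ error incurred in these Gamma-ratio approximations: since Theorem~\ref{thm:one-dimensional} only requires an $O(1)$ bound on the gap ratio this multiplicative slack is harmless, but a rigorous proof should either package the error explicitly or invoke two-sided Gamma-ratio inequalities (e.g.\ Wendel's inequality) to bracket the exact expression between two constant multiples of the stated formula.
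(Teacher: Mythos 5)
Your proof is correct and follows the same skeleton as the paper's: compute the expectation of the $m$th order statistic of the gap, sum these via the Gamma-quotient identity $\sum_{m=1}^n \Gamma(m+1/k)/\Gamma(m) = \Gamma(n+1+1/k)/\bigl((1+1/k)\Gamma(n)\bigr)$, and finish with the Gamma-ratio asymptotic (which the paper attributes to eq.~6.1.46 of Abramowitz--Stegun and you invoke as $\Gamma(x+a)/\Gamma(x)\sim x^a$). Two local differences are worth noting. First, you derive the order-statistic expectation through the probability integral transform ($\bar Y_{(i)} \stackrel{d}{=} U_{(i)}^{1/k}$ with $U_{(i)}\sim\mathrm{Beta}(i,T-i+1)$), whereas the paper plugs $F(x)=x^k$ directly into the order-statistic density; these are the same integral written two ways. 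Second, and more substantively, you prove the Gamma-sum identity by a clean telescoping argument (writing the summand as $f(i+1)-f(i)$ with $f(i)=\Gamma(i+a)/[(a+1)\Gamma(i-1)]$ and $f(1)=0$), while the paper's Claim~2 goes through a Laplace-transform / Poisson-process computation. Your telescoping proof is shorter, more elementary, and self-contained --- a genuine improvement on that sub-step. Your final caveat about the $1+O(1/n)$ error in the Gamma-ratio approximation is also well taken: the paper glosses over this with a ``$\sim$'' in exactly the same place, so both arguments share the same small gap, which (as you note) is harmless for the intended $O(1)$-approximation conclusion but should really be bracketed by a two-sided Gamma-ratio inequality for a fully rigorous statement.
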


First, we bound (\ref{eq:gap}) in terms of (\ref{eq:largest}) by noting
that the expected selection overhead of $\calS$ is given by 
\[E[T_n] =
\sum_{i=1}^n 1/\barmu(c_i) = \sum_{i=1}^n 1/c_i^k\enspace.\]

\begin{lemma}
\label{thm:one-dim-euler}
  For selection thresholds $c_i = 1/i^\alpha$ with $0 \leq \alpha < 1$, we
  have
  \[E[\Gap_n] \leq \frac{1}{1-\alpha} 
  \left( \frac{n}{E[T_n]} \right)^{1/k}\enspace .\]
\end{lemma}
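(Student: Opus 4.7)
The plan is to reduce the claimed inequality to a clean product inequality on the two basic sums that appear in $E[\Gap_n]$ and $E[T_n]$, and then bound each factor by elementary integral/monotonicity estimates. Raising both sides of the desired bound to the $k$th power and multiplying through by $E[T_n]$, it is equivalent to prove
\[ E[\Gap_n]^k \cdot E[T_n] \;\leq\; \frac{n}{(1-\alpha)^k}. \]
Substituting $E[\Gap_n]=(E[\barY]/n)\sum_{i=1}^n c_i$ together with $E[T_n]=\sum_{i=1}^n 1/c_i^k$, and plugging in $c_i = 1/i^\alpha$, this in turn reduces to showing
\[ E[\barY]^k \Bigl(\sum_{i=1}^n i^{-\alpha}\Bigr)^{\!k} \Bigl(\sum_{i=1}^n i^{k\alpha}\Bigr) \;\leq\; \frac{n^{k+1}}{(1-\alpha)^k}. \]

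Next I would bound the three factors on the left separately. First, $E[\barY]=k/(k+1)<1$, so $E[\barY]^k\leq 1$ and this factor can be dropped. Second, since $x^{-\alpha}$ is nonincreasing on $(0,\infty)$, the pointwise comparison $i^{-\alpha}\leq\int_{i-1}^{i} x^{-\alpha}\,dx$ telescopes to
\[ \sum_{i=1}^n i^{-\alpha} \;\leq\; \int_0^n x^{-\alpha}\,dx \;=\; \frac{n^{1-\alpha}}{1-\alpha}, \]
with the improper integral being finite precisely because $\alpha<1$. Third, for the increasing sum I would use the crude bound $i^{k\alpha}\leq n^{k\alpha}$ to get $\sum_{i=1}^n i^{k\alpha}\leq n^{k\alpha+1}$.

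Multiplying the three factor bounds together yields
\[ 1\cdot\Bigl(\frac{n^{1-\alpha}}{1-\alpha}\Bigr)^{\!k}\cdot n^{k\alpha+1} \;=\; \frac{n^{k+1}}{(1-\alpha)^k}, \]
which matches the required right-hand side exactly. Since equality is essentially achieved by the product of estimates, no slack is lost and the stated constant $1/(1-\alpha)$ cannot be improved by this approach. The main (and only mild) obstacle is establishing the integral comparison for $\sum i^{-\alpha}$ with the correct constant $1/(1-\alpha)$, since this constant alone drives the final bound and must blow up as $\alpha\to 1$; the slack from the crude bound on $\sum i^{k\alpha}$ is absorbed by $E[\barY]^k<1$, and the remaining algebra is just bookkeeping of exponents.
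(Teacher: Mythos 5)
Your proof is correct and follows essentially the same line as the paper's: bound the two sums $\sum i^{-\alpha}$ and $\sum i^{\alpha k}$ and multiply. The difference is one of rigor. The paper invokes the Euler--Maclaurin formula and writes ``$\approx$'' for both sum estimates, which is not a genuine inequality; in fact for the increasing sum $\sum i^{\alpha k}$, the integral $n^{\alpha k+1}/(\alpha k+1)$ is a \emph{lower} bound, which is the wrong direction for what one needs here (though the paper ultimately discards the $1/(\alpha k+1)^{1/k}\le 1$ factor, so the conclusion survives). You instead replace both approximations with one-sided bounds that go the right way: the standard integral comparison $\sum_{i=1}^n i^{-\alpha}\le\int_0^n x^{-\alpha}\,dx = n^{1-\alpha}/(1-\alpha)$ and the trivial $\sum_{i=1}^n i^{\alpha k}\le n\cdot n^{\alpha k}$, together with $E[\barY]\le 1$. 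This turns the sketch into a fully rigorous proof of the same inequality with the same constant. The exact computation $E[\barY]=k/(k+1)$ is correct (from density $k x^{k-1}$) but not needed; $E[\barY]\le 1$ suffices, as the paper uses.
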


\begin{proof}
  The proof follows from the Euler-Maclaurin formula and the fact that
  $E[\barX] \leq 1$.
  \begin{align*}
    E[\Gap_n]\cdot (E[T_n])^{1/k}
    &= \left(\frac{E[\barX]}{n}\sum_{i=1}^n i^{-\alpha} \right) 
    \left( \sum_{i=1}^n i^{\alpha k} \right)^{1/k} \\
    &\approx \frac{E[\barX]}{n}\cdot\frac{n^{1-\alpha}}{1-\alpha}
    \cdot \left(\frac{n^{ak+1}}{\alpha k + 1}\right)^{1/k}\\
    &\leq 
    \frac{n^{1/k}}{1-\alpha}\enspace .
  \end{align*}
\end{proof}

Lemmas~\ref{fact:opt-mean-gap} and \ref{thm:one-dim-euler} together
show that the expected mean gap of these thresholds rules is only a
small constant factor bigger than the optimal offline selection rule
that picks the best $n$ out of $E[T_n]$ samples. Finally we show that
these threshold rules are in fact constant factor optimal in the
following stricter sense: if an adversary were allowed to choose any
$n$ out of $T_{n+1} - 1$, its expected mean gap is only a constant
factor smaller than that of the online algorithm. We denote this
optimal offline gap by $\Gap^\OPT_{n+1}$.
We are interested
in $T_{n+1} - 1$ is because the adversary should be able to use the
samples we rejected while we were waiting for the $(n+1)$th selection.

\begin{lemma}
For $c_i$ satisfying $c_i = 1/i^\alpha$ with $0 \leq \alpha < 1$ for all $i$, we have
  \[E[\Gap^\OPT_{n+1}] \geq \frac{1}{16} 
  \left(\frac{n}{E[T_n]}\right)^{1/k}\enspace .\]
\end{lemma}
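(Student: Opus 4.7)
The plan is to apply the formula behind Lemma~\ref{fact:opt-mean-gap} conditionally on the random waiting time $T_{n+1}$, push the expectation inside via Jensen's inequality, and finally translate the resulting $E[T_{n+1}]$ into $E[T_n]$. Conditional on $T_{n+1}=t$, the offline adversary chooses the best $n$ of $t-1$ i.i.d.\ samples from the power-law distribution, so using Lemma~\ref{fact:opt-mean-gap} with $m=t-1$ samples in place of $E[T_n]$ yields
\[
E[\Gap^\OPT_{n+1}\mid T_{n+1}=t] \;=\; \frac{k}{k+1}\left(\frac{n}{t}\right)^{1/k}.
\]
Since the map $t \mapsto t^{-1/k}$ is convex on $(0,\infty)$, Jensen's inequality gives
\[
E[\Gap^\OPT_{n+1}] \;\geq\; \frac{k}{k+1}\left(\frac{n}{E[T_{n+1}]}\right)^{1/k}.
\]

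Next I would bound $E[T_{n+1}]$ in terms of $E[T_n]$. The $(n+1)$th selection is a geometric trial with success probability $\barmu(c_{n+1}) = (n+1)^{-\alpha k}$, so $E[T_{n+1}] = E[T_n] + (n+1)^{\alpha k}$. Comparing against the Riemann-sum lower bound $E[T_n] = \sum_{i=1}^n i^{\alpha k} \geq n^{\alpha k + 1}/(\alpha k + 1)$ shows that $E[T_{n+1}] \leq C\cdot E[T_n]$ for an absolute constant $C$; in fact $E[T_{n+1}]/E[T_n] \to 1$ as $n \to \infty$. Substituting back yields
\[
E[\Gap^\OPT_{n+1}] \;\geq\; \frac{k}{k+1}\cdot C^{-1/k}\cdot \left(\frac{n}{E[T_n]}\right)^{1/k},
\]
and it remains to check that this composite constant is at least $1/16$.

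The main obstacle is precisely this uniform constant check: the factor $k/(k+1)$ shrinks as $k \to 0$, and $C^{-1/k}$ can shrink even faster unless $C$ is close to $1$. My plan to handle this is to exploit $E[T_{n+1}]/E[T_n] = 1 + O(1/n)$ so that $C$ can be driven arbitrarily close to $1$ once $n$ exceeds a modest threshold (controlled only through $\alpha k$), reducing the problem to lower-bounding $k/(k+1)$; the residual small-$k$ and small-$n$ regimes are then absorbed by direct case analysis against the explicit expression $E[T_n] = \sum i^{\alpha k}$. This bookkeeping yields the stated constant $1/16$.
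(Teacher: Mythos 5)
Your proposal follows the same overall outline as the paper's proof: condition on the waiting time $T_{n+1}$, invoke the order-statistic formula of Lemma~\ref{fact:opt-mean-gap}, and then replace $E[T_{n+1}]$ by $E[T_n]$. The tool used for the first step is different, though, and yours is tighter. The paper applies Markov's inequality, restricting to the event $\{T_{n+1}\le 2E[T_{n+1}]\}$ (which holds with probability at least $1/2$) and plugging $2E[T_{n+1}]$ into Lemma~\ref{fact:opt-mean-gap}, thereby losing a factor of $\tfrac{1}{2}\cdot 2^{-1/k}$. You instead push the expectation through the convex map $t\mapsto t^{-1/k}$ via Jensen, which yields the sharper intermediate bound $\tfrac{k}{k+1}\bigl(n/E[T_{n+1}]\bigr)^{1/k}$ with no halving constants at all. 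Both arguments tacitly treat the first $T_{n+1}-1$ samples as an i.i.d.\ draw of that size when invoking the order-statistic formula, even though conditioning on the stopping time $T_{n+1}$ skews the sample distribution; you inherit this gloss from the paper rather than introduce it.

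You are right to flag the constant check, but your proposed resolution does not close the gap. Driving $C=E[T_{n+1}]/E[T_n]$ toward $1$ disposes only of the factor $C^{-1/k}$; the remaining prefactor $k/(k+1)$ coming from Lemma~\ref{fact:opt-mean-gap} vanishes as $k\to 0$ and cannot be lower-bounded by $1/16$ at all, so the case analysis you defer to cannot succeed as described. Indeed, as stated the lemma fails for small $k$: for $n=1$ one has $E[T_1]=1$, so the claimed bound is $1/16$, while a direct calculation gives $E[\Gap^\OPT_2]=\Theta(k)\to 0$. This is the paper's defect as much as yours --- the paper buries it under an ``$\approx$'' and asserts the final $1/16$ without justification. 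Theorem~\ref{thm:one-dimensional} survives nonetheless because the online bound $E[\Gap_n]$ carries the same hidden prefactor: the proof of Lemma~\ref{thm:one-dim-euler} replaces $E[\overline{Y}]=k/(k+1)$ with the crude bound $E[\overline{Y}]\le 1$, thereby wasting it. The clean repair is to keep $k/(k+1)$ explicit on both sides (the conclusion here should read $E[\Gap^\OPT_{n+1}]\geq \tfrac{k}{k+1}\,c(\alpha)\,(n/E[T_n])^{1/k}$ for a constant $c(\alpha)>0$) and let the two copies cancel when forming the approximation ratio, rather than trying to absorb $k/(k+1)$ into a $k$-free numeric constant.
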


\begin{proof}
  By Markov's inequality we have
  \begin{align*}
    E[\Gap^*_n] 
    &\geq \frac{1}{2} E[\Gap^*_n : T_{n+1} 
    \leq 2 E[T_{n+1}]]\\
    &\geq \frac{1}{2}\frac{1}{(1 + 1/k)}\cdot 
    \left(\frac{n}{2E[T_{n+1}]}\right)^{1/k} \\
    &\approx \left(\frac{1}{2}\right)^{1+1/k}\left(\frac{1}{1+1/k}\right)
    \left(\frac{n}{E[T_n]}\right)^{1/k}\left(1+\frac 1n\right)^{-\alpha}\\
    &\geq \frac{1}{16}\left(\frac{n}{E[T_n]}\right)^{1/k} \enspace .
\end{align*}
\end{proof}

 \noindent
 Together with Lemma~\ref{thm:one-dim-euler}, this proves Theorem
 \ref{thm:one-dimensional} with the constant in the $O(1)$ equal to
 $16/(1 - \alpha)$.

\section{Sample selection in binary trees}
\label{sec:tree}
We prove Theorem~\ref{thm:tree} in two parts: (1) the ``fast-growing
thresholds'' case, that is, $c_i=O({\mbox {poly}}(i))$ and $c_i\ge\log
i$ for all $i$, and, (2) the ``slow-growing thresholds'' case, that
is, $c_i \leq c_{i/2} + O(1)$ for all $i$. 

We begin with some notation and observations. Recall that for a node
$x$ in the tree $\calD(x)$ denotes both the unique path from the root
to $x$ as well as the set of nodes covered by $x$ (the shadow of
$x$). Let $\calD_k(x)$ be the $k$th node on $\calD(x)$, $\calD_{\leq
k}$ be the first $k$ nodes of $\calD(x)$, and $\calD_{\geq k}
= \calD(x) \setminus \calD_{< k}$.



We say that a set of $n$ paths associated with nodes $x_1, \ldots,
x_n$ is \emph{independent} at level $k$ if $|\cup_{i=1}^n
\calD_k(x_i)| = n$. That is, no two paths share the same vertex at
level $k$, and are disjoint after level $k$.  We have the following
fact


\begin{fact}
\label{fact:disjoint-paths}
If a set of $n$ paths $\{\calD(x_i)\}$, of length $\ge k'$ each, is
independent at level $k < k'$, then
  \[|\calD(\{x_1,\ldots,x_n\})| = 
  \left|\bigcup_{i = 1}^n \calD(x_i)\right| \geq
  \left|\bigcup_{i = 1}^n \calD_{\geq k}(x_i)\right|\geq n(k' - k)\enspace .\]
\end{fact}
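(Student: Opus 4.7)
The plan is to treat the claimed chain of (in)equalities one step at a time; nothing here is deep, but the argument needs to make explicit why ``independence at level $k$'' propagates to all deeper levels, since that is the point on which the final bound rests.

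First I would dispatch the cosmetic pieces. The equality $|\calD(\{x_1,\ldots,x_n\})| = |\bigcup_i \calD(x_i)|$ is immediate from the definition $\calD(S) = \bigcup_{x \in S} \calD(x)$ given earlier. The first inequality $|\bigcup_i \calD(x_i)| \geq |\bigcup_i \calD_{\geq k}(x_i)|$ follows because $\calD_{\geq k}(x_i) \subseteq \calD(x_i)$ for each $i$, so the union on the right is a subset of the union on the left.

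The real content is the bound $|\bigcup_{i=1}^n \calD_{\geq k}(x_i)| \geq n(k'-k)$. Here I would use the tree structure to lift independence at level $k$ to pairwise disjointness of the suffixes $\calD_{\geq k}(x_i)$. Concretely: if some node $v$ at level $\ell \geq k$ lay in both $\calD(x_i)$ and $\calD(x_j)$ for $i \neq j$, then the unique ancestor of $v$ at level $k$ would lie on both root-to-$x_i$ and root-to-$x_j$ paths, contradicting $|\bigcup_i \calD_k(x_i)| = n$. Hence the sets $\calD_{\geq k}(x_i)$ are pairwise disjoint, and
\[
\left|\bigcup_{i=1}^n \calD_{\geq k}(x_i)\right| = \sum_{i=1}^n |\calD_{\geq k}(x_i)|.
\]
Since each path $\calD(x_i)$ has length at least $k'$, its suffix $\calD_{\geq k}(x_i)$ contains the nodes at levels $k, k+1, \ldots, k'$, giving at least $k' - k$ (in fact $k'-k+1$) nodes. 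Summing over $i$ yields the desired $n(k'-k)$.

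The only spot where one must be careful is the ``tree'' step that turns independence at a single level into independence at all deeper levels; the rest is just bookkeeping. No nontrivial obstacle is anticipated.
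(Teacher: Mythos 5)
Your proof is correct. The paper states this as a Fact with no proof (and indeed bakes the key observation into its gloss of the definition: ``no two paths share the same vertex at level $k$, and are disjoint after level $k$''), so there is no paper argument to compare against; your write-up simply makes explicit what the authors treat as immediate. The one step with real content --- that a shared node at any level $\ell \geq k$ would force a shared ancestor at level $k$, hence pairwise disjointness of the $\calD_{\geq k}(x_i)$ --- is exactly the right observation, and the final count ($\geq k'-k+1$ nodes per suffix, so the stated $n(k'-k)$ holds with slack) is handled carefully.
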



Our analysis depends on whether $c_i$ is a slow-growing or
fast-growing function. We first consider the case of
$c_i=O({\mbox{poly }} n)$ but with $c_i\ge \log_2 i$ for all $i$.

\begin{theorem}
\label{thm:tree-fast}
For the binary tree model described above, any threshold rule based on
a sequence $\{c_i\}$, with $c_i=O({\mbox {poly}}(i))$ and $c_i\ge\log
i$ for all $i$, achieves an $O(1)$ competitive ratio in expectation
with respect to coverage, as well as an $O(1)$ approximation to the
expected coverage.
\end{theorem}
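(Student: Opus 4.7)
The approach is to bound both the offline optimum and the online coverage level-by-level in the tree, with the goal of showing both are $\Theta(n c_n)$ so that the ratio is $O(1)$.

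First, for the offline optimum, I would decompose $Q^*_n = \sum_d N^*_d$ where $N^*_d$ counts distinct depth-$d$ nodes OPT covers. Three trivial upper bounds hold: $N^*_d \leq n$ (only $n$ paths chosen), $N^*_d \leq 2^d$ (only $2^d$ nodes exist at depth $d$), and $N^*_d \leq M_d$, where $M_d$ is the number of distinct depth-$d$ ancestors among \emph{all} $T_n$ observed samples. Since each sample independently reaches depth $d$ with probability $(1-p)^d$, we have $E[M_d \mid T_n] \leq T_n(1-p)^d$. Taking expectations and using concavity of $\min$, together with $E[T_n] = \sum_i (1-p)^{-c_i} \leq n(1-p)^{-c_n}$, the three-way minimum splits into regimes at $d \leq \log_2 n$, $\log_2 n < d \leq c_n$, and $d > c_n$, and evaluates to $E[Q^*_n] = O(n c_n)$ whenever $c_n \geq \log n$.

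Next, for the online algorithm, I would apply Fact~\ref{fact:disjoint-paths} to the last $n/2$ selected paths, each of length at least $c_{\lceil n/2\rceil}$. The key probabilistic input is that two independent random walks in the tree coincide on their first $k$ steps with probability at most $((1-p)^2/2)^k$; a union bound over $\binom{n/2}{2}$ pairs shows that for $k = \Theta(\log n)$ these paths are pairwise disjoint beyond depth $k$, i.e., independent at level $k$ in the sense of the definition preceding the Fact, with probability $1 - O(1/n)$. Fact~\ref{fact:disjoint-paths} then yields coverage at least $(n/2)(c_{\lceil n/2\rceil} - \Theta(\log n))$. Because $c_i = O(\mathrm{poly}(i))$ and can be taken monotone without loss, $c_{\lceil n/2\rceil} = \Theta(c_n)$ (the polynomial degree absorbs the constant factor), yielding $E[Q_n] = \Omega(n c_n)$ whenever $c_n$ is sufficiently larger than $\log n$.

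The main obstacle is the boundary regime $c_n = \Theta(\log n)$, where Fact~\ref{fact:disjoint-paths} becomes vacuous. Here I would argue directly: at each depth $d \leq \log_2(n/2)$, the condition $c_i \geq \log i$ forces at least $n/2$ selected paths to reach depth $d$, and their depth-$d$ ancestors are i.i.d.\ uniform over the $2^d$ nodes at that level, so by a standard balls-in-bins concentration argument they collectively cover $\Omega(\min(n, 2^d))$ distinct depth-$d$ nodes. Summing over $d \leq \log_2(n/2)$ yields $E[Q_n] = \Omega(n)$, which matches the OPT bound $E[Q^*_n] = O(n)$ that holds in the same regime (since $E[T_n] = O(n^{O(1)})$). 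Combining both regimes gives the $O(1)$ approximation to expected coverage, and the same high-probability concentration applied to both $Q^*_n$ and $Q_n$ (each a sum of weakly dependent indicators with standard exponential tails) upgrades the result to the $O(1)$ competitive ratio in expectation.
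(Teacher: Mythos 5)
Your upper bound is a genuine alternative to the paper's: you decompose $Q^*_n$ level by level and control each level by the three-way minimum $\min(n,\,2^d,\,E[T_n](1-p)^d)$, whereas the paper writes $f(n)=c_n-\log n$ and observes deterministically that $\Cover^*_n\le|\calD(O_n)\cap\calD(R_n)|+|\calD(S_n)|\le(2n+nf(n))+\Cover_n$, using that every rejected path has length $\le c_n$. Both give $E[Q^*_n]=O(n(1+f(n)))$, but the paper's bound is \emph{pointwise}, which makes the expected-ratio claim follow directly from a lower-bound event on $\Cover_n$; with your expectation-only bound you would need two-sided concentration of $Q^*_n$, which you gesture at but do not supply (and $Q^*_n$, a maximum over exponentially many subsets of a random-size pool, is not an obvious sum of weakly dependent indicators).

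The real gap is in your lower bound. Your union-bound argument over $\binom{n/2}{2}$ pairs only certifies independence of the last $n/2$ selected paths beyond depth $k=\Theta(\log n)$ with a constant strictly larger than $1$ (you need $2^{-k}\binom{n/2}{2}=o(1)$, forcing $k\ge 2\log_2 n$ or so). This is strictly weaker than the paper's balls-in-bins argument, which works at depth exactly $\log_2(n/2)$: with $n/2$ i.i.d.\ uniform ancestors among $n/2$ nodes, the number of distinct ones is $\ge(n/2)(1-2/e)$ with probability $\ge 1/2$ by Markov on the number of empty bins, and then Fact~\ref{fact:disjoint-paths} immediately yields $\Cover_n\ge N\,f(n/2)$. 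Because your $k$ exceeds $\log_2(n/2)$ by a constant factor, you create an artificial ``boundary regime'' $\log n\le c_n\lesssim C\log n$ that the paper simply does not have.

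Your treatment of that boundary regime is incorrect. You claim $E[Q^*_n]=O(n)$ ``since $E[T_n]=O(n^{O(1)})$,'' but that does not follow: when $c_n=2\log_2 n$, say, the offline optimum is a union of $n$ paths of depth up to $\Theta(\log n)$ which are mostly disjoint past depth $\log_2 n$, so $E[Q^*_n]=\Theta(n\log n)$, not $O(n)$. Meanwhile your direct lower bound sums $\Omega(\min(n,2^d))$ only over $d\le\log_2(n/2)$ and so stops at $\Omega(n)$, leaving a $\Theta(\log n)$ gap in the ratio. The fix is exactly to continue that per-level balls-in-bins count all the way to depth $c_{\lceil n/2\rceil}$ (every level $d\le c_{\lceil n/2\rceil}$ contributes $\Omega(\min(n,2^d))$ in expectation), which gives $E[Q_n]=\Omega(n(1+f(n/2)))$ — but at that point you have reproduced the paper's balls-in-bins lemma and your union-bound step and the regime split are both unnecessary.
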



\begin{proof}
Let $f(i)=c_i-\log i$. We will first obtain an upper bound on
  $\Cover^\OPT_n$. Let $S_n$ be the $n$ selected nodes, $O_n$ the
  optimal set of $n$ paths, and $R_n$ the paths that are rejected and
  are not covered by $S_n$.
  \begin{align*}
    \Cover^\OPT_n &= |(\calD(O_n) \cap \calD(R_n)) \cup (\calD(O_n) \cap \calD(S_n))|\\
    &\leq |\calD(O_n) \cap \calD(R_n)| + |\calD(S_n)|\\
    &\leq (2n + nf(n)) + \Cover_n\enspace,
  \end{align*}
  Here the last inequality follows by noting that
  $\calD(O_n) \cap \calD(R_n)$ forms a binary tree with at most
  $2n$ vertices in the first $\log n$ levels, and at most $nf(n)$
  other vertices since it is the union of $n$ paths of length at most
  $\log n + f(n)$.

 Next we obtain a lower bound on $\Cover_n$. Consider the last
 $n/2$ selected nodes $s_{n/2 + 1}, \ldots, s_n$ and their paths 
 $D(s_{n/2+1}), \ldots, D(s_n)$. By definition,
  $|\calD(s_{n/2 + i})| \geq c_{n/2} = \log n/2 + f(n/2)$. Let $N =
  |\cup_{i=1}^{n/2}\calD_{\log n/2}(s_{n/2+i})|$ be the number of paths  $\calD(s_{n/2 + i})$
  that are independent at
  level $\log n/2$. 
  Since $\calD_{\log n/2}(s_{n/2 + i})$ chooses from each of the
  $n/2$ nodes at level $\log n/2$ equiprobably, $N$ has the same
  distribution as the number of occupied bins when $n/2$ balls are
  thrown into $n/2$ bins uniformly at random. The expected number of
  unoccupied bins is $n/2e$. By Markov's inequality, with probability
  at least $1/2$, the number of empty bins is at most $2\frac{n}{2e}$.
  So, we have that $\Pr[N \geq \frac{n}{2}(1 - 2/e)] \geq 1/2$. Thus, we have
  \begin{align*}
    E\left[\frac{\Cover^\OPT_n}{\Cover_n}\right]
    &\leq E\left[\frac{\Cover^\OPT_n}{\Cover_n} :  
      N \geq \frac{n}{2}(1 - 2/e) \right] + 
          \Pr[N \leq \frac{n}{2}(1 - 2/e)]\\
    &\leq E\left[\frac{n(2 + f(n)) + \Cover_n}{\Cover_n} :  
      N \geq \frac{n}{2}(1 - 2/e) \right] + \frac{1}{2}\\
    \intertext{and by Fact \ref{fact:disjoint-paths},}
    &\leq \frac{3}{2} + \frac{2(2 + f(n))}{(1 - 2/e)f(n/2)}
    = O(1)\enspace ,
  \end{align*}
  where the constant depends on $f(n)$.
  Likewise we can obtain a bound on the approximation factor by noting that  
  \begin{align*}
    E\left[\Cover_n\right]
    &\geq E\left[\Cover_n  :  
      N \geq \frac{n}{2}(1 - 2/e) \right]\cdot
          \Pr[N \geq \frac{n}{2}(1 - 2/e)]\\
    &\geq \frac{1}{2}\frac{n}{2}(1 - 2/e)f(n/2)\\
  \end{align*}
  and therefore,
  \begin{align*}
    \frac{E\left[\Cover^\OPT_n\right]}{E\left[\Cover_n\right]}
    &\le 1+ \frac{2n+nf(n)}{\frac{1}{2}\frac{n}{2}(1 - 2/e)f(n/2)} = O(1)
  \end{align*}
  where once again the constant depends on $f(n)$.
\end{proof}

\noindent
Next we consider the case when $c_i$ is a slow-growing function. In
particular we assume that $c_i \leq c_{i/2}+O(1)$ for all $i$.

\begin{theorem}
  For the binary tree model described above, any threshold rule based
  on a sequence $\{c_i\}$, with $c_i \leq c_{i/2} + O(1)$ for all $i$,
  achieves an $O(1)$ competitive ratio in expectation with respect to
  coverage, as well as an $O(1)$ approximation to the expected
  coverage.
\end{theorem}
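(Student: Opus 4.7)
Since $c_i \le c_{i/2} + O(1)$ iterates to $c_n = O(\log n)$, we have $E[T_n] \le n(1-p)^{-c_n} = \mathrm{poly}(n)$, and so the task is to compare $E[\Cover_n]$ and $E[\Cover^\OPT_n]$ when the optimum picks $n$ out of polynomially many samples of geometric depth. My plan is to bound both quantities by a level-by-level computation and show that they match to within a constant factor, rather than invoking the balls-into-bins argument of Theorem~\ref{thm:tree-fast} at level $\log(n/2)$ (which may not be reached by the selected paths in this regime).

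For the upper bound on $E[\Cover^\OPT_n]$, I would use that $\Cover^\OPT_n$ is at most the total coverage of all $T_n$ sampled paths, and for $m$ i.i.d.\ samples this total coverage satisfies
\begin{equation*}
E\bigl[\text{total coverage of } m \text{ samples}\bigr] \;=\; \sum_{d \ge 0} 2^d\, \bigl(1 - (1 - q^d)^m\bigr),
\end{equation*}
where $q = (1-p)/2$ is the probability that a single random sample passes through a given node at depth $d$. Splitting this sum at the saturation depth $d_0 = \log_{1/q}(T_n)$, where the per-node expected multiplicity $m q^d$ transitions from $\ge 1$ to $< 1$, and summing each piece as a geometric series gives $E[\Cover^\OPT_n] = O\bigl(T_n^{1/\log_2(1/q)}\bigr)$.

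For the lower bound on $E[\Cover_n]$, conditioning a sample on having depth $\ge c_n$ boosts the per-node coverage probability from $q^d$ to $(1/2)^d$ for $d \le c_n$ and to $(1/2)^d (1-p)^{d-c_n}$ for $d > c_n$. Applying the same identity with $m = n$ and these boosted probabilities, I would verify that the new saturation depth matches $d_0$ to within a constant, which yields $E[\Cover_n] = \Omega(E[\Cover^\OPT_n])$ and hence an $O(1)$ approximation to expected coverage. The competitive-ratio-in-expectation bound then follows by conditioning on $\{T_n \le 2 E[T_n]\}$ via Markov's inequality and applying the approximation bound on that event, as in the proof of the corresponding lemma in Section~\ref{sec:one-dim}.

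The main obstacle I anticipate is verifying that the two saturation depths actually line up across the whole slow-growing regime $c_n = O(\log n)$; this requires a careful case analysis in $d$, with splits at $d = c_n$ and $d = \log n$, since the dominant contribution to the coverage sum shifts between the saturated and unsaturated pieces as $c_n$ ranges from $O(1)$ up to $\Theta(\log n)$. The slow-growth condition $c_n \le c_{n/2} + O(1)$ enters crucially by ensuring that the thresholds for the last $n/2$ selected samples differ from the final threshold $c_n$ by only $O(1)$, so the boost in per-node probabilities described above is uniform across those samples up to constants.
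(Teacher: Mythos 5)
Your plan for bounding the \emph{ratio of expectations} is plausible but substantially more elaborate than what the statement requires, and it leaves a genuine hole in the \emph{expectation-of-ratio} (competitive ratio) bound.

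On the approximation-to-expected-coverage side, your level-by-level identity is fine in principle, and the observation that the slow-growth condition forces the two ``saturation depths'' to agree up to $O(1)$ is the right kind of calculation. But you should be aware of two things. First, the first $T_n$ samples are \emph{not} i.i.d.\ conditioned on $T_n$ (the selection rule constrains them), so writing the total coverage of ``$m$ i.i.d.\ samples'' with $m=T_n$ needs justification; the cleanest fix is to apply Wald's identity per tree node: $\Pr[v \text{ covered among first } T_n \text{ samples}] \le E[T_n]\,q^{\mathrm{depth}(v)}$, then sum. Second, the paper's argument sidesteps the level-by-level sum entirely: since rejected samples have depth $< c_n$ (assuming the $c_i$ are nondecreasing), their union $\calD(R_n)$ sits in the first $c_n$ levels, so $\Cover^\OPT_n \le 2^{c_n+1} + \Cover_n$ holds deterministically. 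That single inequality replaces your whole upper-bound computation.

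The real gap is in the competitive-ratio bound $E[\Cover^\OPT_n/\Cover_n]$. Conditioning on $\{T_n \le 2E[T_n]\}$ controls the numerator, but says nothing about the denominator being large \emph{pointwise}. Your level-by-level computation produces a lower bound on $E[\Cover_n]$, which is exactly what does not suffice for $E[\Cover^\OPT_n/\Cover_n]$: you need $\Cover_n = \Omega(2^{c_n})$ to hold with constant probability, not just in expectation. The analogy to the Section~3 lemma does not carry over --- there, conditioning on $T_{n+1}$ small directly lower-bounds $\Gap^\OPT$, but here conditioning on $T_n$ small does not lower-bound $\Cover_n$. What is missing is a concentration step, and this is precisely what the paper's balls-into-bins argument supplies: it yields $\Pr[N \ge 2^{c_{n/2}}(1-2/e)] \ge 1/2$, where $N$ counts distinct level-$c_{n/2}$ ancestors of the last $n/2$ selections. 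Note also that the paper does not abandon balls-into-bins in the slow-growing regime, as you suggest it must; it simply performs the argument at level $\min(c_{n/2}, \log(n/2))$ rather than at $\log(n/2)$, which is always reached by the selected paths. If you keep your approach, you would still need to add a second-moment or negative-correlation concentration argument for the number of distinct covered nodes at that level, at which point you have essentially reconstructed the paper's argument with extra overhead.
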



\begin{proof}
  We follow the outline of the previous proof. Consider the last $n/2$
  selected nodes $s_{n/2 + 1}, \ldots, s_n$. If $c_{n/2}\ge\log(n/2)$,
  we consider the number of independent paths at level $\log(n/2)$ and
  the proof goes through exactly as before. So suppose that
  $c_{n/2}<\log (n/2)$. Let $N =
  |\cup_{i=1}^{n/2} \calD_{c_{n/2}}(s_{n/2 + i}) |$. Then there are
  at most $2^{c_{n/2}} < n/2$ nodes at level $c_{n/2}$, so we can use
  the same balls-and-bins argument as in the previous proof to obtain
  $\Pr[N \geq 2^{c_{n/2}}(1 - 2/e)] \geq 1/2$.
 
  Since there are at most $2^{c_n + 1}$ nodes in the first $c_n$
  levels of a binary tree, and $\Cover_n \geq N$, we have that
  \begin{align*}
    E\left[\frac{\Cover^\OPT_n}{\Cover_n}\right]
    &\leq E\left[\frac{2^{c_n+1} + \Cover_n}{\Cover_n}  :  
      N \geq 2^{c_{n/2}}(1 - 2/e) \right] + \frac{1}{2}\\
    &\leq \frac{3}{2} + \frac{2^{c_n + 1}}{2^{c_{n/2}}(1 - 2/e)} 
    = O(1)
  \end{align*}
  We can also use the same argument as in the previous proof to prove
  the claimed bound on the approximation factor:
  \[    
  \frac{E\left[\Cover^\OPT_n\right]}{E\left[\Cover_n\right]}
  \leq 1 + \frac{2^{c_n + 1}}{\frac{1}{2}\cdot 2^{c_{n/2}}\cdot(1 - 2/e)}
  = O(1)\enspace.
  \]
\end{proof}

\section{Sample selection in the skyline model}
\label{sec:two-dim}
In this section we focus on the following ``skyline'' model. We first
consider the case where the universe $X$ is the unit square $[0,1]^2$
and $(x_1,y_1)\prec (x_2,y_2)$ for $(x_1,y_1), (x_2,y_2)\in X$ if and
only if $x_1\le x_2$ and $y_1\le y_2$. In
Section~\ref{sec:skyline-arbitrary} we discuss general spaces.

\subsection{Uniform and product distributions over $2$ dimensions}
As mentioned earlier, we consider threshold rules of the form
$\calS_i=\{(x,y)\in X : 
\mu(\calU(x,y))\le c_i\}$ for some sequence of numbers
$\{c_i\}$, where $\calU(x,y)$ is the set of points that dominate
$(x,y)$.

For simplicity, we first prove the following version of
Theorem~\ref{thm:skyline-product} for the uniform distribution over
$X$, and then describe how it extends to general product
distributions.

\begin{theorem}
\label{thm:skyline-uniform}
For the skyline setting on the unit square with the uniform
distribution, any threshold rule based on a sequence $\{c_i\}$ with
$c_i=\Omega(1/{\mbox {poly}}(i))$ achieves a $1+o(1)$ competitive ratio in
expectation with respect to the gap.
\end{theorem}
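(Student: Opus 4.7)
The plan is to directly compute the asymptotics of $E[\Gap_n]$ for the online rule and of a lower bound on $E[\Gap_n^\OPT]$ for the offline optimum, and then show these agree in leading order so that the expected ratio is $1+o(1)$.

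First I would switch to complementary coordinates $u = 1-x$, $v = 1-y$, in which the uniform measure is preserved, $\mu(\calU(x,y)) = uv$, each threshold set becomes $\calS_i = \{(u,v) : uv \leq c_i\}$ with measure $c_i(1 + \ln(1/c_i))$, and the gap of a set $S$ equals the measure of the $(u,v)$ whose south-west rectangle $R_{u,v} = [0,u]\times[0,v]$ contains no point of $S$. The $i$-th selected sample is uniform on $\calS_i$, independently across $i$, so writing $p_j(u,v) = \Pr[s_j \in R_{u,v}]$ --- which equals $uv/\mu(\calS_j)$ when $uv \leq c_j$ and $c_j(1+\ln(uv/c_j))/\mu(\calS_j)$ otherwise --- gives the explicit formula $E[\Gap_n] = \int_{[0,1]^2} \prod_{j=1}^n (1 - p_j(u,v)) \,du\,dv$.

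Next I would bound this integral on the two regions $A = \{uv \leq c_n\}$ and $B = \{uv > c_n\}$. On $A$ every $p_j$ uses the first formula, so $\sum_j p_j(u,v) = uv \cdot E[T_n]$ (since $E[T_n] = \sum_j 1/\mu(\calS_j)$), and $\prod_j(1-p_j) \leq \exp(-uv\, E[T_n])$. A direct evaluation then gives $\int_{[0,1]^2} e^{-uv N}\,du\,dv = (1+o(1))\ln N / N$ for $N = E[T_n]$, and the truncation to $A$ costs only a super-polynomially small amount because $c_n\, E[T_n] = \Omega(n/\log n)$ under $c_i = \Omega(1/\poly(i))$. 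On $B$, for every $j$ with $c_j < uv$ one has $1-p_j = \ln(1/uv)/(1 + \ln(1/c_j)) \leq 1 - \Theta(\ln(uv/c_n)/\log n)$, and multiplying this over the $\Theta(n)$ indices $j$ close to $n$ bounds the product by $\exp(-\Omega(n/\log n))$, so $B$ contributes negligibly. Combined, $E[\Gap_n] \leq (1+o(1)) \ln E[T_n] / E[T_n]$.

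For the lower bound on the optimum I would use monotonicity of coverage: the optimal gap with $n$ chosen from $N := T_{n+1} - 1$ samples is at least the gap of the down-set of all $N$ uniform samples, whose expectation is exactly $\int (1-uv)^N du\,dv = H_{N+1}/(N+1) = (1+o(1))\ln N/N$. Concentration of $T_{n+1}$ around $E[T_{n+1}] = (1+o(1))E[T_n]$ (it is a sum of $n+1$ independent geometric variables with polynomially bounded variance) then yields $E[\Gap_n^\OPT] \geq (1-o(1))\ln E[T_n]/E[T_n]$, matching the upper bound. The hard part will be upgrading this bound on the ratio of expectations to a bound on $E[\Gap_n/\Gap_n^\OPT]$, which is what the theorem requires: one needs simultaneous concentration of both $\Gap_n$ and $\Gap_n^\OPT$ around their expectations, with a tight lower-tail control on $\Gap_n^\OPT$ so that atypical realizations where it is abnormally small do not blow up the expected ratio. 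Combining these tail estimates (via independence of the $n$ online selections and concentration of the empirical distribution of $N$ uniform points) with a crude $O(1)$ worst-case bound on the ratio off the typical event is where I expect the technical heart of the argument to lie.
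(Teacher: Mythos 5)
Your asymptotic computation, if carried through, would give $E[\Gap_n] \leq (1+o(1))\, E[\Gap^*_n]$, i.e.\ the rule $(1+o(1))$-approximates the \emph{expected gap}. But the theorem asks for $E[\Gap_n/\Gap^*_n] \leq 1+o(1)$, the \emph{expected competitive ratio}, and as you flag yourself the conversion from one to the other is the missing step. Here your plan has a genuine gap: the ``crude $O(1)$ worst-case bound on the ratio off the typical event'' that you invoke does not obviously exist. Nothing stated a priori prevents a realization in which $\Gap^*_n$ is much smaller than $\Gap_n$, and without a pointwise bound on the ratio, the contribution of the atypical event to $E[\Gap_n/\Gap^*_n]$ cannot be dismissed no matter how small its probability is.

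The observation that closes this is exactly what the paper's entire proof hinges on (Lemma~\ref{lem:skyline-OPT}): $\Gap_n \neq \Gap^*_n$ only when the optimum uses a rejected point $x$, and any rejected point satisfies $\mu(\calU(x)) > c_n$ since $x \notin \calS_n$; because the optimal set is an antichain, $\calU(x)$ is disjoint from $\calD(O_n)$, forcing $\Gap^*_n > c_n$ and hence $\Gap_n/\Gap^*_n < 1/c_n$ on the bad event. This yields $E[\Gap_n/\Gap^*_n] \leq 1 + \Pr[\E_n]/c_n$ in one step, and the rest of the paper's argument is just a Chernoff bound and a union bound showing $\Pr[\E_n] = o(c_n)$. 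Once you have this lemma in hand, the asymptotic estimates of both expectations become unnecessary; without it, your concentration plan has no control over the tails. The structural lemma is the missing idea, not a routine technicality to be filled in.
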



Let $S_n$ denote the set of samples selected by the (implicit)
threshold rule out of the set $T_n$ of samples seen. Let $R_n =
T_n\setminus S_n$ denote the samples rejected by the threshold rule,
and $O_n$ denote an optimal subset of $T_n$ of size $n$. Recall that
our goal is to maximize the shadow of the selected subsample, and so
all points in $O_n$ must be undominated by other points. Let $\E_n$
denote the event that $O_n$ contains a point in $R_n$, that is, there
is a point in $R_n\setminus\calD(S_n)$. It is immediate that
$\Gap_n\ne \Gap^*_n$ if and only if the event $\E_n$ happens. We will
show that the event $\E_n$ occurs with very low probability and use
this fact to prove Theorem~\ref{thm:skyline-uniform}.


We first show how the approximation factor and expected competitive
ratio with respect to the gap of a threshold rule relates to the
probability of the event $\E_n$.
 
\begin{lemma}
\label{lem:skyline-OPT}
For the skyline model with an arbitrary distribution, 
the gap of a
threshold rule based on the sequence $\{c_i\}$ satisfies the
following, where $\E_n$ is the event that $\Gap_n\ne \Gap^*_n$, we have
\[
 E_\mu\left[\frac{\Gap_n}{\Gap^*_n}\right] 
 \leq 1 + \frac{1}{c_n}\Pr[\E_n]\enspace. \]  
\end{lemma}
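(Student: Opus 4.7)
The plan is to decompose
$$\frac{\Gap_n}{\Gap^*_n} \;=\; 1 + \frac{\Gap_n-\Gap^*_n}{\Gap^*_n}\,\mathbb{1}_{\E_n},$$
using the fact that on $\neg\E_n$ the two gaps coincide by the very definition of $\E_n$. Taking expectation, the lemma reduces to showing the pointwise inequality $(\Gap_n - \Gap^*_n)/\Gap^*_n \le 1/c_n$ on $\E_n$. Since trivially $\Gap_n - \Gap^*_n \le \Gap_n \le 1$, it suffices to establish the lower bound $\Gap^*_n > c_n$ everywhere on $\E_n$.

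To prove this lower bound, I would start, on $\E_n$, with a witness $r \in O_n \cap R_n$ having $r \notin \calD(S_n)$. I would then, without loss of generality, promote $r$ to a maximal element of $O_n$: any dominator $r' \succeq r$ in $O_n$ must again satisfy $r' \in R_n$ and $r' \notin \calD(S_n)$, because if instead $r' \in S_n$ or $r' \in \calD(S_n)$ then $r \in \calD(r') \subseteq \calD(S_n)$, contradicting the choice of $r$. So I may assume $r$ is maximal in $O_n$. Maximality immediately gives $\calU(r) \cap \calD(O_n) = \emptyset$, since any $y \succ r$ cannot be dominated by a $z \in O_n$ without $z$ also dominating $r$. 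Hence $\mu(\calU(r)) \le 1 - \mu(\calD(O_n)) = \Gap^*_n$. On the other hand, $r \in R_n$ means $r$ was seen and rejected at some moment when the count of selected samples was some $j \le n-1$, so $\mu(\calU(r)) > c_j \ge c_n$, where the last inequality uses the monotonicity of $\{c_i\}$ built into the nesting $\calS_i \supseteq \calS_{i+1}$ of a threshold rule. Chaining the two inequalities yields $\Gap^*_n > c_n$.

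The only non-routine step is the promotion of $r$ to a maximal element of $O_n$; everything else is bookkeeping. Putting the pieces together,
$$E\!\left[\frac{\Gap_n}{\Gap^*_n}\right] \;\le\; \Pr[\neg\E_n] \cdot 1 \;+\; \Pr[\E_n]\cdot\Big(1 + \tfrac{1}{c_n}\Big) \;=\; 1 + \frac{\Pr[\E_n]}{c_n},$$
as claimed. The main obstacle I anticipate is making the maximal-promotion argument watertight (in particular, ruling out that the promoted $r$ slips into $\calD(S_n)$), since everything else is a one-line computation once the lower bound on $\Gap^*_n$ is in hand.
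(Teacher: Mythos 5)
Your proof is correct and follows the same skeleton as the paper's: isolate the event $\E_n$, show $\Gap^*_n > c_n$ on $\E_n$, and use $\Gap_n \le 1$. The one place you add genuine value is the justification of $\Gap^*_n \ge \mu(\calU(r))$: the paper simply declares that ``all points in $O_n$ must be undominated by other points'' and then reads the inequality off for an arbitrary witness $x \in O_n \cap R_n$, whereas you prove the needed disjointness $\calU(r)\cap\calD(O_n)=\emptyset$ by first promoting the witness to a maximal element of $O_n$, checking along the way that promotion preserves membership in $R_n\setminus\calD(S_n)$. This is a more careful argument that does not rely on the unstated (and not quite universally true) WLOG that $O_n$ consists only of maximal points of $T_n$. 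Your exact decomposition $\Gap_n/\Gap^*_n = 1 + \frac{\Gap_n-\Gap^*_n}{\Gap^*_n}\mathbb{1}_{\E_n}$ is also slightly tidier than the paper's Bayes-rule inequality, though mathematically equivalent. In short: same route, with a gap the paper glossed over now made watertight.
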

\begin{proof}
We apply Bayes' rule to get
\[E_\mu\left[\frac{\Gap_n}{\Gap^*_n}\right] 
 \leq 1 + E_\mu\left[\frac{\Gap_n}{\Gap^*_n} : \E_n\right]\Pr[\E_n]\enspace. \]
  If the event $\E_n$ happens then by definition $O_n$ contains a
  point in $R_n$, say $x$. Then, $\Gap^*_n
  = \Gap(O_n) \geq \mu(\calU(x)) > c_n$, where the last
  inequality follows from noting that $x\not\in \calS_n$. On the other
  hand, $\Gap_n$ is always less than $1$. Therefore the claim follows.
\end{proof}


To complete the proof of Theorem~\ref{thm:skyline-uniform} we give an
upper bound on the probability of the event $\E_n$. Our goal is to
show that with high probability, every sample in $R_n$ is dominated by
some sample in $S_n$. We start with a simple observation about the
number of rejected samples.

\begin{fact}
\label{fact:rejects-bound}
$E[|R_n|] \leq E[T_n] = \sum_{i=1}^n 1/\mu(\calS_i) \le n/c_n$.
\end{fact}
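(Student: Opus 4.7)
The plan is to handle the chain of inequalities link by link, with all the substance concentrated in the final inequality.

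First, the bound $E[|R_n|] \le E[T_n]$ is immediate: by definition $R_n = T_n \setminus S_n$ and $|S_n| = n$, so in fact $|R_n| = T_n - n \le T_n$.

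Second, the identity $E[T_n] = \sum_{i=1}^n 1/\mu(\calS_i)$ (modulo a harmless index shift coming from the convention that $\calS_0 = X$) follows by decomposing the total waiting time into the $n$ consecutive waiting times, one per selection. After some number of samples have been accepted, the time until the next acceptance is geometric with success probability $\mu(\calS_i)$ for the appropriate index $i$, because fresh i.i.d.\ draws from $\mu$ independently land in $\calS_i$ with probability $\mu(\calS_i)$. Summing the means and invoking linearity of expectation gives the claimed formula.

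Third, and this is the crux, I would prove $\sum_{i=1}^n 1/\mu(\calS_i) \le n/c_n$. Because the sequence $\calS_1 \supseteq \calS_2 \supseteq \cdots \supseteq \calS_n$ is nested, $\mu(\calS_i) \ge \mu(\calS_n)$ for every $i \le n$, so the sum is at most $n/\mu(\calS_n)$; it then suffices to show $\mu(\calS_n) \ge c_n$. For this I would invoke measure continuity to produce an element $y \in X$ with $\mu(\calU(y)) = c_n$. For any $z \in \calU(y)$ (i.e.\ $y \prec z$), transitivity of $\prec$ gives $\calU(z) \subseteq \calU(y)$, so $\mu(\calU(z)) \le c_n$ and hence $z \in \calS_n$. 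Thus $\calU(y) \subseteq \calS_n$, yielding $\mu(\calS_n) \ge \mu(\calU(y)) = c_n$.

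The main obstacle is this last inequality; everything else is routine. The key insight is the geometric observation that the upper shadow $\calU(y)$ of the boundary element $y$ lies entirely inside the threshold set $\calS_n$, which is exactly what converts the pointwise threshold condition on $\mu(\calU(\cdot))$ into a lower bound on the measure of $\calS_n$.
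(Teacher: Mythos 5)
Your proof is correct, and it fills in a gap the paper leaves open: the statement is presented as a ``Fact'' with no accompanying argument. Your first two steps ($|R_n| = T_n - n \le T_n$, and decomposition of the waiting time into geometric pieces) are standard, and your note about the harmless off-by-one in the index convention is right --- the sets $\calS_i$ are nested decreasing, so $\sum_{i=0}^{n-1}1/\mu(\calS_i) \le \sum_{i=1}^{n}1/\mu(\calS_i)$ and the bound survives either indexing. The only substantive step is showing $\mu(\calS_n) \ge c_n$, and your use of measure continuity is exactly the right mechanism: instantiate it at the bottom element $\underline{x}$ (for which $\mu(\calU(\underline{x})) = 1 \ge c_n$) to produce $y$ with $\mu(\calU(y)) = c_n$, then use transitivity of $\prec$ to get $\calU(y) \subseteq \calS_n$. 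This is precisely the same device the paper itself uses to prove Fact~\ref{fact:conseq-meas-cont}, just applied at the bottom of the order rather than at an arbitrary $y \notin \calS_k$; in fact you could alternatively derive $\mu(\calS_n)\ge c_n$ directly from Fact~\ref{fact:conseq-meas-cont} by taking $y = \underline{x}$ there (if $\underline{x}\notin\calS_n$) or noting $\calS_n = X$ otherwise. So the proposal is essentially the paper's own technique, correctly deployed.
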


\begin{fact}
Let $\mu$ be the uniform measure over $[0,1]^2$. Then for all $n$,
$\mu(\calS_n)= c_n(1+\ln 1/c_n)\enspace$.
\end{fact}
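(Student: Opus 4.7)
The plan is to compute $\mu(\calS_n)$ by a direct integration. Under the uniform measure on $[0,1]^2$ with the standard product partial order, the upward closure of a point $(x,y)$ is the rectangle $[x,1]\times[y,1]$, so
\[\mu(\calU(x,y)) = (1-x)(1-y).\]
Therefore the threshold set is exactly
\[\calS_n = \{(x,y)\in[0,1]^2 : (1-x)(1-y)\le c_n\}.\]

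Next I would make the substitution $u = 1-x$, $v = 1-y$, which is measure-preserving on $[0,1]^2$. This reduces the problem to computing the area of $A = \{(u,v)\in[0,1]^2 : uv\le c_n\}$. It is easier to bound the complement $A^c = \{uv > c_n\}$: for this set $u$ must exceed $c_n$, and then $v$ ranges over $(c_n/u,1]$, giving
\[\mu(A^c) = \int_{c_n}^1 \left(1 - \frac{c_n}{u}\right)\,du = (1-c_n) + c_n\ln c_n.\]
Subtracting from $1$ gives $\mu(\calS_n) = \mu(A) = c_n - c_n\ln c_n = c_n(1 + \ln(1/c_n))$, as claimed.

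There is no real obstacle here; the whole statement reduces to a one-line integral once the measure continuity of the uniform distribution lets us write $\calS_n$ as the sub-level set of the hyperbola $uv = c_n$. The only mild care needed is to remember that $c_n \le 1$ (so $\ln(1/c_n)\ge 0$ and the lower limit $u=c_n$ is the right cutoff), which follows from measure continuity and the fact that probabilities lie in $[0,1]$.
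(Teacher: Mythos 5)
Your proof is correct, and since the paper states this as a Fact without supplying a proof, the direct integration you carry out is exactly the intended routine verification. The change of variables $u=1-x$, $v=1-y$, the reduction to the area under the hyperbola $uv=c_n$, and the one-line integral all check out: $\int_{c_n}^1\bigl(1-\tfrac{c_n}{u}\bigr)\,du = 1-c_n+c_n\ln c_n$, whence $\mu(\calS_n)=c_n-c_n\ln c_n=c_n(1+\ln(1/c_n))$.
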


\noindent
We first note that many of the samples in $S_n$ are in fact in
$\calS_n$.
\begin{lemma}
\label{lem:const-frac}
Let $\alpha$ be a constant satisfying $\frac{c_i}{c_{i/2}}\ge \alpha$ for
large enough $i$. Then with probability $1-o(c_n)$, $\alpha n/4$ of
the samples in $S_n$ belong to $\calS_n$.
\end{lemma}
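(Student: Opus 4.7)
The plan is to exploit the fact that each selected sample is drawn from $\mu$ conditioned on the corresponding threshold set $\calS_i$, and to apply a concentration bound to the last $n/2$ selections.

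First I would note that by the definition of the threshold rule, the $(i{+}1)$th selected sample $s_{i+1}$ is distributed as $\mu$ conditioned on lying in $\calS_i$, and, across different indices $i$, these conditional draws are mutually independent (each uses a fresh block of the i.i.d.\ stream $\T$). In particular, for any $i\in\{n/2,\ldots,n-1\}$, the nesting $\calS_n\subseteq \calS_i\subseteq \calS_{n/2}$ gives
\[
\Pr[s_{i+1}\in\calS_n \mid s_{i+1}\in\calS_i]=\frac{\mu(\calS_n)}{\mu(\calS_i)}\;\ge\;\frac{\mu(\calS_n)}{\mu(\calS_{n/2})}.
\]

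Next I would invoke the preceding fact $\mu(\calS_k)=c_k(1+\ln(1/c_k))$. Since $c_n\le c_{n/2}$, the logarithmic factor satisfies $1+\ln(1/c_n)\ge 1+\ln(1/c_{n/2})$, so the ratio above is bounded below by $c_n/c_{n/2}$, which by the hypothesis of the lemma is at least $\alpha$ for all sufficiently large $n$.

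Finally, let $X$ be the number of samples among $s_{n/2+1},\ldots,s_n$ that fall in $\calS_n$. By independence and the uniform lower bound $\alpha$ on the per-sample success probability, $X$ stochastically dominates a $\text{Binomial}(n/2,\alpha)$ random variable, whose mean is $\alpha n/2$. A standard multiplicative Chernoff bound (deviation by a factor of $1/2$) yields $\Pr[X<\alpha n/4]\le e^{-\Omega(\alpha n)}$. Since the governing hypothesis of Theorem~\ref{thm:skyline-product} requires $c_n=\Omega(1/\mathrm{poly}(n))$, an $e^{-\Omega(n)}$ tail is $o(c_n)$, which is exactly the bound the lemma claims. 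The main step to get right is the simplification of $\mu(\calS_n)/\mu(\calS_{n/2})$ down to $c_n/c_{n/2}$ via the monotonicity of $1+\ln(1/c)$; everything else is a direct Chernoff plus the exponential-versus-polynomial comparison.
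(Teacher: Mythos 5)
Your proof is correct and follows essentially the same route as the paper: condition on $\calS_n \subseteq \calS_i \subseteq \calS_{n/2}$ for the last $n/2$ selections, reduce $\mu(\calS_n)/\mu(\calS_{n/2})$ to $c_n/c_{n/2} \ge \alpha$ via the formula $\mu(\calS_k) = c_k(1+\ln(1/c_k))$, and finish with Chernoff plus $c_n = \Omega(1/\mathrm{poly}(n))$. The only (harmless) polish you add is phrasing the count as stochastically dominating a $\mathrm{Binomial}(n/2,\alpha)$ rather than calling it a binomial outright, and spelling out that the conditional draws are independent because the threshold rule is oblivious.
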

\begin{proof}
Consider samples belonging to $S_n\cap\calS_{n/2}$; these are at least
$n/2$ in number. We claim that a constant fraction of these are in
$\calS_n$ with high probability. In particular, \[\Pr_{x \sim \mu}
[x \in \calS_{n}  :  x \in \calS_{n/2 + i}]
= \frac{\mu(\calS_{n})}{\mu(\calS_{n/2 +
i})} \geq \frac{\mu(\calS_{n})}{\mu(\calS_{n/2})} \geq \alpha\enspace, \] Here
we used the fact that $\frac{1+\ln 1/c_n}{1+\ln 1/c_{n/2}} \ge 1$.
Therefore, the expected number of samples in $S_n\cap\calS_n$ is at
least $\alpha n/2$. Since this is a binomial random variable, by
Chernoff bounds,
\[\Pr\left[ |S_n\cap\calS_n| <  \frac{1}{2}\alpha(n/2)\right] < 
\exp\left\{-\alpha(n/2)\left(\frac{1}{2}\right)^2/2\right\} = o(c_n)\enspace,\]
since $c_i=\Omega(1/{\mbox {poly}}(i))$.
\end{proof}

The following lemma shows that given sufficient number of samples in
$S_n$ belonging to $\calS_n$, with high probability $R_n$ is dominated
by these samples. For the next lemma, let $\E'_n$ denote the event
that at least one point in $R_n$ is not dominated by $S_n\cap\calS_n$
and let $z=|S_n\cap\calS_n|$.

We first state the following consequence of measure continuity.

\begin{fact}
\label{fact:conseq-meas-cont}
  Let $\mu$ satisfy measure continuity. Then for all
  $k \in \mathbb{N}$, and for all $y \notin \calS_k$, we have
  $\mu(\calU(y) \cap \calS_k) \geq c_k$.
\end{fact}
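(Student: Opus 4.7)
The plan is to use measure continuity to produce a ``pivot'' point $z$ strictly above $y$ in the partial order whose upward cone has measure exactly $c_k$, and then show that this entire cone sits inside $\calU(y)\cap \calS_k$.

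First, I would unpack the hypothesis: $y \notin \calS_k$ means, by the definition of the threshold set, that $\mu(\calU(y)) > c_k$. In particular, $c_k$ lies in the interval $[0,\mu(\calU(y))]$, so measure continuity applies: there exists $z \in \calU(y)$ with $\mu(\calU(z)) = c_k$. By the definition of $\calS_k$, this $z$ is itself a member of $\calS_k$.

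Next, I would argue that $\calU(z) \subseteq \calU(y)\cap \calS_k$. For any $w \in \calU(z)$, transitivity of $\prec$ gives $y \prec z \prec w$, so $w \in \calU(y)$. Moreover, $z \prec w$ implies $\calU(w) \subseteq \calU(z)$, hence $\mu(\calU(w)) \leq \mu(\calU(z)) = c_k$, which places $w$ in $\calS_k$. Taking measures of this inclusion,
\[
\mu(\calU(y)\cap \calS_k) \;\geq\; \mu(\calU(z)) \;=\; c_k,
\]
which is exactly what we want.

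There is no real obstacle here: the only subtlety is making sure the pivot $z$ actually exists and lies in the right cone, which is precisely what measure continuity guarantees. The transitivity argument handles the rest automatically.
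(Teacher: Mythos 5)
Your proof is correct and follows essentially the same route as the paper's: apply measure continuity to obtain a pivot $z \in \calU(y)$ with $\mu(\calU(z)) = c_k$, note $z \in \calS_k$, establish $\calU(z) \subseteq \calU(y)\cap\calS_k$, and conclude by monotonicity of measure. You simply spell out the transitivity argument that the paper leaves implicit.
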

\begin{proof}
By measure continuity, there exists a $z \in \calU(y)$ such that
$\mu(\calU(z)) = c_k \leq \mu(\calU(y))$. Thus, we have that $z
\in \calS_k$, $\calU(z)\subseteq \calU(y) \cap \calS_k$, and so
$\mu(\calU(y) \cap \calS_k) \geq \mu(\calU(z)) = c_k$.
\end{proof}

\begin{lemma}
\label{lem:dominate-rejects}
Conditioned on $z$, we have
\[\Pr[\E'_n] \leq \exp\left\{-\frac{z c_n}{\mu(\calS_n)}\right\}\cdot E[|R_n|]\enspace.\]
\end{lemma}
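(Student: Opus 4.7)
The plan is to apply a union bound over the rejected samples in $R_n$, bounding, for each candidate $y \in R_n$, the probability that none of the $z$ points of $S_n \cap \calS_n$ dominates $y$. This requires two ingredients: Fact~\ref{fact:conseq-meas-cont}, which guarantees that a nontrivial portion of $\calS_n$ lies above any rejected point, and a conditional-independence statement for the samples in $S_n \cap \calS_n$.

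First, I will observe that every $y \in R_n$ must satisfy $y \notin \calS_n$: at the moment $y$ was seen, say with $k \leq n-1$ previous selections, we had $y \notin \calS_k$, and combined with $\calS_n \subseteq \calS_k$ this forces $y \notin \calS_n$. Fact~\ref{fact:conseq-meas-cont} then yields $\mu(\calU(y) \cap \calS_n) \geq c_n$, so a single point drawn from $\mu$ restricted to $\calS_n$ dominates $y$ with probability at least $c_n/\mu(\calS_n)$.

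Next, I will argue that, conditional on $z$, the $z$ members of $S_n \cap \calS_n$ are i.i.d.\ draws from $\mu|\calS_n$ and are independent of $R_n$. The key observation is that any sample landing in $\calS_n$ is selected regardless of the current count, since $\calS_n \subseteq \calS_k$ for every $k \leq n-1$. Recasting each draw from $\mu$ as a two-stage experiment---first a $\mu(\calS_n)$-biased ``inside/outside $\calS_n$'' coin flip, then an independent draw from the appropriate conditional distribution---the outside draws alone determine $R_n$ and $S_n \setminus \calS_n$, while the inside draws form an i.i.d.\ $\mu|\calS_n$ sequence that is independent of everything on the outside side.

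Combining these pieces, for any fixed $y \notin \calS_n$ the probability that none of $z$ independent $\mu|\calS_n$ draws dominates $y$ is at most $(1 - c_n/\mu(\calS_n))^z \leq \exp\{-zc_n/\mu(\calS_n)\}$. A union bound over $y \in R_n$ followed by taking expectation over $R_n$ yields the claimed inequality. The main obstacle is justifying the conditional-independence decomposition cleanly, since the rejection rule is dynamic and depends on prior selections; the two-stage coin-flip representation described above decouples the randomness driving rejections from the randomness of the $S_n \cap \calS_n$ identities, and this decoupling is precisely what makes the union bound valid.
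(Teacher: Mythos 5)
Your proof is correct and follows the same approach as the paper: a union bound over rejected samples, Fact~\ref{fact:conseq-meas-cont} to lower bound the per-sample domination probability by $c_n/\mu(\calS_n)$, and independence of the $z$ draws landing in $\calS_n$ from the rejected samples. Your two-stage inside/outside decomposition supplies a cleaner and more explicit justification of the conditional independence that the paper handles tersely via an informal appeal to Wald's identity.
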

\begin{proof} 
  For any sample $y$ in $R_n$, the probability that it is dominated by
  a uniformly random sample in $\calS_n$ is 
  \begin{align*}
    \Pr_{x \sim \mu} [y \in \calD(x)  :  x \in \calS_n] 
    = \frac{\mu(\calU(y)\cap\calS_n)}{\mu(\calS_n)} 
    \geq \frac{c_n}{\mu(\calS_n)}\enspace.
  \end{align*}

  So, the probability that $y$ is not dominated by any point in
  $S_n\cap\calS_n$ is $\left(1 - \frac{c_n}{\mu(\calS_n)}\right)^{z}$. Since this bound
  holds regardless of the specific value of $y$, applying Wald's
  identity,
  \begin{align*}
    \Pr[\E'_n]
    &\leq E[\text{number of samples in $R_n$ not dominated by $S_n\cap\calS_n$}]\\
    &= \left(1 - \frac{c_n}{\mu(\calS_n)}\right)^{z} \cdot E[|R_n|]\\
    &\leq \exp\left\{-\frac{z c_n}{\mu(\calS_n)}\right\} \cdot E[|R_n|]\enspace.
  \end{align*}
\end{proof}



  




\noindent
Finally we are ready to prove Theorem~\ref{thm:skyline-uniform}.

\begin{proofof}{Theorem~\ref{thm:skyline-uniform}}
  Using Lemma~\ref{lem:skyline-OPT}, we have that
  \begin{equation*}
    E_\mu[\Gap_n / \Gap^*_n] 
    \leq 1 + \frac{1}{c_n}(\Pr[\E_n  :  z \geq \alpha(n/4)] 
    + \Pr[z < \alpha(n/4)])\enspace.
  \end{equation*}
  where $z=|S_n\cap\calS_n|$. By Lemma~\ref{lem:const-frac}, the
  second term in the parentheses is $o(c_n)$. Event $\E_n$ implies
  event $\E'_n$. So applying Lemma~\ref{lem:dominate-rejects} and Fact
  \ref{fact:rejects-bound} we get
  \begin{align*}
    \Pr[\E_n  :  z \geq \alpha(n/4)] 
    &\leq \exp\left\{-\frac{z c_n}{\mu(\calS_{n})}\right\} \cdot E[|R_n|]\\
    &\leq \exp\{-\alpha(n/4)\}\cdot 
    \frac{n}{c_n}\\
    &= o(c_n)\enspace,
  \end{align*}
  since $c_i=\Omega(1/{\mbox {poly}}(i))$.
  \end{proofof}

\paragraph{General product distributions.}
We now consider the skyline model with $\mu$ being an arbitrary
product distribution. In particular, for a point $(a,b)\in X$, let
$\mu(a,b) = \mu^x(a)\mu^y(b)$ for one-dimensional measures $\mu^x$ and
$\mu^y$. Our proof of Theorem~\ref{thm:skyline-product} is nearly
identical to our argument for the uniform case. We note first that as
before
\[
 E_\mu\left[\frac{\Gap_n}{\Gap^*_n}\right] \leq 1
 + \frac{1}{c_n}\Pr[\E_n]\enspace. \]

To bound the probability of $\E_n$, we give a reduction from the
product measure setting to the uniform measure setting. In particular,
consider mapping $X$ into $X'=[0,1]\times[0,1]$ by mapping a point
$(a,b)\in X$ to $(\mu^x(a),\mu^y(b))\in X'$. Then it is easy to see
that $\calS_i$ in $X$ gets mapped to $\calS_i$ in $X'$ for the same
sequence $\{c_i\}$. Then, the probability of the event $\E_n$ under
the transformation remains the same as before, and is once again
$o(c_n)$.

\subsection{General spaces}
\label{sec:skyline-arbitrary}
In this section, we show how our results from the skyline model
generalize and prove Theorem \ref{thm:skyline-arbitrary}. 


Once again we note that Lemmas \ref{lem:skyline-OPT} and
\ref{lem:dominate-rejects} carry over to this setting. So our main
approach is to bound the probability of the event $\E'_n$. The main
difference from the previous analysis is that the best bound on the
size of $\calS_i$ we can obtain is $c_i\le\mu(\calS_i)\le 1$. This
means that we can no longer claim that a constant fraction of the
samples in $S_n$ belong to $\calS_n$. Instead we will show that under
a stronger condition on the sequence $\{c_i\}$, namely
$c_i=\Omega(1/i^\epsilon)$, the number of samples in $\calS_n$ is
$\Omega(nc_n)$ with a high probability. This will suffice to give us
the bound we need. In particular, we have the following weaker version of Lemma \ref{lem:const-frac}.




\begin{lemma}
\label{lem:weak-const-frac}
  For $c_i = \Omega(1/i^{\epsilon})$ with $\epsilon < 1$, we have that 
  \[\Pr[|S_n \cap \calS_n| < nc_n/2] = o(c_n)\enspace. \]
\end{lemma}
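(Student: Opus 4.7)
The plan is to write $|S_n \cap \calS_n| = \sum_{j=1}^n X_j$ where $X_j = \mathbf{1}[s_j \in \calS_n]$, verify that the $X_j$ are independent Bernoullis each with success probability at least $c_n$, and then apply a Chernoff bound.

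First, I would argue that for $j \le n$ the $j$-th selected sample $s_j$ is distributed as $\mu$ conditioned on $\calS_{j-1}$, and that the $s_j$'s are mutually independent. This is immediate from the threshold rule: when $j-1$ samples have already been selected, the algorithm simply rejects i.i.d.\ draws from $\mu$ until one lands in $\calS_{j-1}$, and the waiting outcomes for different $j$'s use disjoint blocks of i.i.d.\ samples.

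Next, using the nested structure $\calS_n \subseteq \calS_{j-1}$ for $j \le n$, I would write
\[
\Pr[s_j \in \calS_n] \;=\; \frac{\mu(\calS_n)}{\mu(\calS_{j-1})} \;\ge\; \mu(\calS_n).
\]
To lower bound $\mu(\calS_n)$ in the general-measure setting, I would invoke Fact \ref{fact:conseq-meas-cont} with $y = \underline{x}$: since $\mu(\calU(\underline{x})) = 1 > c_n$ we have $\underline{x} \notin \calS_n$, so $\mu(\calS_n) \ge \mu(\calU(\underline{x}) \cap \calS_n) \ge c_n$. Hence each $X_j$ is Bernoulli with mean at least $c_n$, and $E[|S_n \cap \calS_n|] \ge n c_n$.

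Finally, by the standard multiplicative Chernoff bound applied to the independent $X_j$'s,
\[
\Pr\bigl[|S_n \cap \calS_n| < n c_n / 2 \bigr] \;\le\; \exp\bigl(-n c_n / 8\bigr).
\]
The hypothesis $c_n = \Omega(1/n^{\epsilon})$ with $\epsilon < 1$ gives $n c_n = \Omega(n^{1-\epsilon}) \to \infty$, so the right-hand side decays faster than any polynomial in $n$ and is in particular $o(n^{-\epsilon}) = o(c_n)$, as required.

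The main thing to be careful about is the lower bound $\mu(\calS_n) \ge c_n$ in an arbitrary measured space: this is exactly what measure continuity (via Fact \ref{fact:conseq-meas-cont}) buys us, and without it the Chernoff step would collapse. Beyond that, the argument is a routine concentration estimate, which is why only the weaker conclusion $|S_n \cap \calS_n| \gtrsim n c_n$ (rather than the constant-fraction bound of Lemma \ref{lem:const-frac}) is available here.
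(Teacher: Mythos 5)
Your proof is correct and follows essentially the same approach as the paper's: lower-bound each $\Pr[s_j \in \calS_n]$ by $\mu(\calS_n) \geq c_n$ and apply a multiplicative Chernoff bound to the independent indicator variables. You supply two details the paper glosses over --- the mutual independence of the selected samples $s_j$ (immediate from non-adaptivity of the threshold rule) and the derivation of $\mu(\calS_n)\geq c_n$ via Fact~\ref{fact:conseq-meas-cont} applied to $\underline{x}$ --- both of which are correct and tighten the exposition.
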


\begin{proof}
  For all $i \leq n$, we have that 
  \[\Pr_{x \sim \mu} [x \in \calS_{n}  :  x \in \calS_i]
  \geq \mu(\calS_n) \geq c_n\enspace.
  \]
  Thus, by Chernoff bounds,
  \[\Pr\left[ |S_n\cap\calS_n| < \frac{nc_n}{2}\right] 
  \leq \exp\{-\Omega(n^{1 - \epsilon})\} = o(c_n)\enspace.
  \]
\end{proof}

\noindent
Finally, we use the previous approach to prove Theorem \ref{thm:skyline-arbitrary}.

\begin{proofof}{Theorem \ref{thm:skyline-arbitrary}}
  By Lemma \ref{lem:skyline-OPT}, we have
  \[    E_\mu[\Gap_n / \Gap^*_n] 
    \leq 1 + \frac{1}{c_n}(\Pr[\E_n  :  z \geq \alpha(n/4)] 
    + \Pr[z < \alpha(n/4)])\enspace,
    \]
    where $z = |S_n \cap \calS_n|$.
    Using Fact~\ref{fact:rejects-bound} and
    Lemma~\ref{lem:dominate-rejects}, we have that
  \begin{align*}
    \Pr[\E_n  :  z \geq nc_n/2]& \le \Pr[\E'_n  : 
    z \geq nc_n/2] \\
    & \le \exp\left\{-\frac{nc_n^2}{2\mu(\calS_n)}\right\}\cdot E[|R_n|]\\
    & \leq \exp\{-n\cdot n^{-2(1/2-\Omega(1))}\} \cdot n/c_n\\
    & = o(c_n)\enspace,
\end{align*}
where the last inequality follows by
$c_i=i^{-(1/2-\Omega(1))}$. Together with Lemma
\ref{lem:weak-const-frac}, this proves Theorem
\ref{thm:skyline-arbitrary}.

\end{proofof}

\bibliographystyle{plain}
\bibliography{random}

\appendix
\section{Missing proofs}
\label{sec:appendix}

We present the technical proofs we skipped over in the main article.

\subsection{Sample selection in one dimension}
In section \ref{opt-exp-mean-gap:sec}, we had some technical
derivations that we prove in detail here.

\begin{proofof}{Lemma~\ref{fact:opt-mean-gap}}
Let $G(x)=x^k$ be the distribution of the gap of any sample. The
expectation of the $m$th order statistic for $t_n$ samples from this
distribution is as follows. See Claim~\ref{clm:1} below for a proof.
\[\E[x_{(m)}]= \frac{\Gamma(t_n+1)\Gamma(m+1/k)}{\Gamma(t_n +1 + 1/k)\Gamma(m)}.\]

So, we have that the expected mean gap of the $n$ smallest-gap
subsamples out of $t_n$ samples is
\begin{align*}
  \E\left[\frac{\sum_{i=1}^nx_{(i)}}{n}\right] 
  &= (1/n)\sum_{i=1}^n\E[x_{(i)}]\\
  &= (1/n) \frac{\Gamma(t_n+1)}{\Gamma(t_n +1
    + 1/k)}\sum_{m=1}^n \frac{\Gamma(m+1/k)}{\Gamma(m)}\\
  \intertext{we postpone the proof of the following step to Claim~\ref{clm:2} below,}
  &= (1/n)\frac{\Gamma(t_n+1)}{\Gamma(t_n +1
    + 1/k)}\frac{n\Gamma(n + 1 + 1/k)}{(1 + 1/k)\Gamma(n + 1)}\\
  &\sim \frac{1}{1 + 1/k} \left( \frac{n}{t_n + 1} \right)^{1/k},
\end{align*}
where the last line follows by equation 6.1.46 of
\cite{abramowitz1965handbook}. 
\end{proofof}

We now prove the two claims missing in the above proof.

\begin{claim}
\label{clm:1}
The expectation of the $m$th order statistic with
$F(x) = x^k$ is
\[\E[x_{(m)}]= \frac{\Gamma(t_n+1)\Gamma(m+1/k)}{\Gamma(t_n +1 + 1/k)\Gamma(m)}.\]
\end{claim}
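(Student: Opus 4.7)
The plan is to compute the expectation directly by writing down the density of the $m$th order statistic and evaluating the resulting integral via a substitution that converts it into a Beta function. Throughout, the samples are i.i.d.\ on $[0,1]$ with CDF $F(x) = x^k$ and hence density $f(x) = k x^{k-1}$.

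First I would invoke the standard formula for the density of the $m$th order statistic among $t_n$ i.i.d.\ samples,
\[
f_{(m)}(x) \;=\; \frac{t_n!}{(m-1)!\,(t_n-m)!}\, F(x)^{m-1}\,(1-F(x))^{t_n-m}\,f(x),
\]
and plug in $F(x)=x^k$, $f(x)=k x^{k-1}$ to get
\[
\E[x_{(m)}] \;=\; \frac{t_n!\,k}{(m-1)!\,(t_n-m)!}\int_0^1 x^{km}(1-x^k)^{t_n-m}\,dx.
\]

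Next I would apply the change of variables $u = x^k$, so that $du = k x^{k-1}\,dx$ and hence $dx = \tfrac{1}{k} u^{1/k-1}\,du$, while $x^{km} = u^m$. The factor of $k$ in front cancels, and the integral becomes
\[
\int_0^1 u^{m+1/k-1}(1-u)^{t_n-m}\,du \;=\; B\!\left(m + \tfrac{1}{k},\; t_n - m + 1\right) \;=\; \frac{\Gamma(m+1/k)\,\Gamma(t_n-m+1)}{\Gamma(t_n+1+1/k)}.
\]
Substituting this back, the factor $\Gamma(t_n-m+1) = (t_n-m)!$ cancels with the $(t_n-m)!$ in the denominator of the coefficient, and writing $t_n! = \Gamma(t_n+1)$ and $(m-1)! = \Gamma(m)$ yields exactly the stated closed form.

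The derivation is essentially routine once the Beta substitution is spotted, so there is no real obstacle; the only step worth double-checking is the change of variables (making sure the $k$ in the Jacobian cancels the $k$ coming from $f(x)$, and that the exponent of $u$ in the transformed integrand is $m + 1/k - 1$ rather than some off-by-one variant). After that, matching up Gamma functions and factorials is bookkeeping.
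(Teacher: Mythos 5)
Your proposal is correct and follows essentially the same route as the paper: write the standard density of the $m$th order statistic, substitute $y = x^k$ to produce a Beta integral, and simplify the resulting Gamma functions. The bookkeeping (exponent $m+1/k-1$, cancellation of the Jacobian's $1/k$ against $f(x)=kx^{k-1}$) matches the paper's computation.
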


\begin{proof}
From page 236 of \cite{tWIL62a}, we have that
the $m$th order statistic of a sample of size $t_n$ from a population
having continuous distribution function $F(x)$ and probability
distribution function $f(x)$ has the probability distribution
function:
\[\frac{\Gamma(t_n+1)}{\Gamma(m)\Gamma(t_n-m+1)}[F(x_{(m)})]^{m-1}[1 -
F(x_{(m)}]^{t_n-m}f(x_{(m)})dx_{(m)}\]

So, we have that the expectation of the $m$th order statistic with
$F(x) = x^k$ is
\begin{align*}
  \E[x_{(m)}]
  &= \int_0^1x_{(m)}\cdot\frac{\Gamma(t_n+1)}{\Gamma(m)\Gamma(t_n-m+1)}\cdot[F(x_{(m)})]^{m-1}\cdot[1
  - F(x_{(m)}]^{t_n-m}\cdot f(x_{(m)})dx_{(m)}\\
  &= \frac{\Gamma(t_n+1)}{\Gamma(m)\Gamma(t_n-m+1)}\int_0^1x_{(m)}[x_{(m)}^k]^{m-1}[1
  - x_{(m)}^k]^{t_n-m}kx_{(m)}^{k-1}dx_{(m)}\\
  &=  \frac{\Gamma(t_n+1)}{\Gamma(m)\Gamma(t_n-m+1)}
  \int_0^1x_{(m)}^{k(m-1) + 1}[1 -
  x_{(m)}^k]^{t_n-m}kx_{(m)}^{k-1}dx_{(m)}\\
  \intertext{using the substitution $y = x^k$, we have}
  &=  \frac{\Gamma(t_n+1)}{\Gamma(m)\Gamma(t_n-m+1)}
  \int_0^1y^{m-1+1/k}[1 - y]^{t_n - m} dy\\
  &= \frac{\Gamma(t_n+1)}{\Gamma(m)\Gamma(t_n-m+1)} \cdot B(m +
  1/k,t_n -m + 1)
  \intertext{where $B()$ is the Beta function}
  &= \frac{\Gamma(t_n+1)}{\Gamma(m)\Gamma(t_n-m+1)} \cdot 
  \frac{\Gamma(m+1/k)\Gamma(t_n - m + 1)}{\Gamma(t_n +1 + 1/k)}\\
  &= \frac{\Gamma(t_n+1)\Gamma(m+1/k)}{\Gamma(t_n +1 + 1/k)\Gamma(m)}\\
\end{align*}
\end{proof}

\begin{claim}
\label{clm:2}
  \[\sum_{m=1}^n\frac{\Gamma(m+1/k)}{\Gamma(m)} = \frac{n\Gamma(n + 1 + 1/k)}{(1/k + 1)\Gamma(n + 1)}. \]  
\end{claim}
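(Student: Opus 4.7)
The plan is a straightforward induction on $n$, relying only on the functional equation $\Gamma(x+1) = x\Gamma(x)$. To lighten the notation I would write $\alpha = 1/k$, so the target identity becomes
\[\sum_{m=1}^n \frac{\Gamma(m+\alpha)}{\Gamma(m)} = \frac{n\,\Gamma(n+1+\alpha)}{(\alpha+1)\,\Gamma(n+1)} = \frac{\Gamma(n+1+\alpha)}{(\alpha+1)\,\Gamma(n)},\]
where the second equality uses $\Gamma(n+1) = n\Gamma(n)$. The second form is more convenient for carrying out the induction, and converting back to the form in the statement is just one application of the functional equation.

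For the base case $n=1$, the left-hand side is $\Gamma(1+\alpha)/\Gamma(1) = \Gamma(1+\alpha)$, and the right-hand side in its second form is $\Gamma(2+\alpha)/[(\alpha+1)\Gamma(1)] = (1+\alpha)\Gamma(1+\alpha)/(\alpha+1) = \Gamma(1+\alpha)$, so equality holds. For the inductive step, writing $S_n$ for the sum, I would split off the last term, apply the hypothesis, and then factor:
\[S_n = S_{n-1} + \frac{\Gamma(n+\alpha)}{\Gamma(n)} = \frac{\Gamma(n+\alpha)}{\Gamma(n)}\left[\frac{n-1}{\alpha+1} + 1\right] = \frac{(n+\alpha)\,\Gamma(n+\alpha)}{(\alpha+1)\,\Gamma(n)}.\]
A final application of $(n+\alpha)\Gamma(n+\alpha) = \Gamma(n+1+\alpha)$ gives $\Gamma(n+1+\alpha)/[(\alpha+1)\Gamma(n)]$, which matches the desired closed form.

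There is no real obstacle here: this is a pure gamma-function identity and the entire proof amounts to bookkeeping around $\Gamma(x+1) = x\Gamma(x)$. The only thing to be careful about is applying the functional equation in the right direction at each step, in particular using $\Gamma(n+1) = n\Gamma(n)$ to move between the two equivalent forms of the right-hand side so that the induction closes cleanly on the form stated in the claim.
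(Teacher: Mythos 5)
Your induction is correct, but it takes a genuinely different route from the paper's argument. The paper writes $\Gamma(m+\alpha)/\Gamma(m)$ via the integral representation $\Gamma(s) = \int_0^\infty t^{s-1}e^{-t}\,dt$, interchanges summation and integration to recognize the truncated exponential series $\sum_{m=0}^{n-1} t^m/m!\,e^{-t} = \Pr[\operatorname{Poisson}(t) < n]$, rewrites that as a tail probability of a $\operatorname{Gamma}(n)$ random variable, and then evaluates the resulting double integral by Fubini. Your proof instead telescopes: rewrite the right-hand side as $\Gamma(n+1+\alpha)/[(\alpha+1)\Gamma(n)]$ using $\Gamma(n+1) = n\Gamma(n)$, verify the base case, and carry out the inductive step with one more application of the functional equation, arriving at $(n+\alpha)\Gamma(n+\alpha) = \Gamma(n+1+\alpha)$. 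Each step in your argument checks out, so the claim is established. What your route buys is elementarity and brevity --- it uses only $\Gamma(x+1) = x\Gamma(x)$ and no analysis beyond that, making it self-contained and easy to verify line by line. What the paper's route buys is a method that does not presuppose the closed form: the Laplace-transform / Poisson-process trick can \emph{discover} the answer when it is not already guessed, and it sits naturally alongside the order-statistics and Beta-function calculations in the surrounding Claim~\ref{clm:1}. If the goal is just to certify the identity, your induction is the cleaner proof; if the goal is to illustrate a technique that scales to related sums, the paper's transform is the more instructive one.
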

\begin{proof}
To simplify a sum involving gamma functions, we can use the idea that
\[
  \Gamma(s) = \int_0^\infty t^{s-1} e^{-t} dt
\]
Now, we apply the Laplace transform. Using $s = m + \alpha$, and then interchanging summation and integration, we get
\begin{align*}
  \sum_{m=1}^n \Gamma(m+\alpha)/\Gamma(m)
  &= \int_0^\infty t^\alpha e^{-t} \sum_{m=0}^{n-1} \frac{t^m}{m!} dt\\
  &= \int_0^\infty t^\alpha \sum_{m=0}^{n-1} e^{-t} \frac{t^m}{m!}  dt\\
  &= \int_0^\infty t^\alpha \Pr[ \operatorname{Poisson}(t) < n ] dt\\
  &= \int_0^\infty t^\alpha \Pr[ G(n) \geq t ] dt,
\end{align*}
where $G(n)$ is a $\operatorname{Gamma}(n)$ random variable.
The last identity follows by the Poisson process.
Plugging in
\[
\Pr[G(n) \geq t] = \int_t^\infty u^{n-1} e^{-u} du
\]
and then interchanging the order of integration gives an integral
that evaluates to 
\[ \frac{n\Gamma(n + 1 + 1/k)}{(1/k + 1)\Gamma(n + 1)}.\]
\end{proof}

\end{document}